\newcommand{\Linf}{\ensuremath{\text{\L}_\infty}}
\newcommand{\Luka}{{\L}ukasiewicz}
\newcommand{\inte}[1]{#1^\circ}
\newcommand{\tuple}[1]{\langle{#1}\rangle}
\renewcommand{\phi}{\varphi}
\renewcommand{\emptyset}{\varnothing}
\title{Regional, Lattice and Logical Representations\\
	of Neural Networks}
\author{
	Sandro Preto
	\institute{Center for Mathematics, Computing and Cognition\\
		Federal University of ABC, Brazil}
	\institute{
	Institute of Mathematics and Statistics\\
		University of São Paulo, Brazil}
	\email{sandro.preto@ufabc.edu.br}
	\and
	Marcelo Finger
	\institute{Institute of Mathematics and Statistics\\
	University of São Paulo, Brazil}
	\email{mfinger@ime.usp.br}
}
\newcommand{\titlerunning}{Regional, Lattice and Logical Representations of Neural Networks}
\newcommand{\authorrunning}{S. Preto \& M. Finger}
\begin{document}
\maketitle

\begin{abstract}
A possible path to the interpretability of neural networks is to (approximately) represent them in the regional format of piecewise linear functions, where regions of inputs are associated to linear functions computing the network outputs. We present an algorithm for the translation of feedforward neural networks with ReLU activation functions in hidden layers and truncated identity activation functions in the output layer. We also empirically investigate the complexity of regional representations outputted by our method for neural networks with varying sizes. Lattice and logical representations of neural networks are straightforward from regional representations as long as they satisfy a specific property. So we empirically investigate to what extent the translations by our algorithm satisfy such property.
\end{abstract}

\section{Introduction}
Neural networks are computational models that aim to generalize patterns found in datasets from which they are determined by means of a learning algorithm \cite{GBC2016}. Despite the undeniable advancement in the state of the art of intelligent systems promoted by neural networks, their lack of interpretability is subject to criticism. Neural networks suffer from the \emph{black box problem} due to the lack of justification for their results and the impossibility to directly inspect their learned information \cite{Cas2016,Fin2020}.

As several architectures of neural networks realize piecewise linear functions or may be approximated by them, a path towards interpretability is through \emph{regional format} representations of such neural networks and functions by explicit sets of pairs $\tuple{p,\Omega}$ of a linear piece $p$ and a region $\Omega$ such that, for a vector of input values $\mathbf{x}\in\Omega$, the output is given by $p(\mathbf{x})$. An algorithm for establishing regional representations from feedforward neural networks with rectified linear units as activation functions is proposed in \cite{RRS2019}.

The main goal of this work is to introduce an algorithm for computing regional format representations of \emph{ReLU--TId neural networks}, which are feedforward neural networks with rectified linear units as activation functions in hidden layers and truncated identity as activation functions in the output layer. Such algorithm outputs representations in the \emph{pre-closed regional format}, where regions are polyhedra. Rather than just adapting the iterative method in \cite{RRS2019}, we present a novel recursive approach that allows a correctness proof by a straightforward induction argument.

An important feature of neural networks is that they are compact representations of functions. Then, although regional representations might provide interpretability of neural networks, they also might be exponential in the size of their traditional representation as graphs. In Section \ref{sec:experiments}, we empirically measure the complexity of regional representations determined by our method for randomly generated ReLU--TId neural networks with varying numbers of neurons and layers and varying layer sizes.

Lattice representation is another possibility for representing neural networks and is achieved by combining maximum and minimum operations over linear pieces. Such representations further enable the codification of ReLU--TId neural networks in logical systems as \Luka\ infinitely-valued logic (\Linf) and its extensions \cite{CDM2000,Ger2001,PF2020,PF2022}, leading to yet another path to interpretability. Lattice and logical representations find applications in the formal verification of neural networks in attempts to circumvent the black box problem and allow their use in critical tasks; for instance, in aircraft collision avoidance alerts and autonomous vehicles. There are methods for formal verification using the lattice representation of neural networks \cite{APS2023} and methods that codify properties of neural networks in the language of \Linf\ departing from their logical representation \cite{PF2023b,PF2023}.

Lattice and logical representations may be built in polynomial time from ReLU--TId neural networks given in the pre-closed regional format as long as such encodings satisfy the so-called \emph{lattice property} (Section \ref{sec:preliminaries}) \cite{PF2022}. In this case, the regional representations are said to be in the \emph{closed regional format}. This work also aims at empirically experimenting how far from satisfying lattice property are randomly generated neural networks.

The rest of this work is organized as follows. Section \ref{sec:preliminaries} introduces neural networks and their graph, regional, lattice and logical representations. Section \ref{sec:nn2pwl} presents an algorithm for translating ReLU--TId neural networks into the pre-closed regional format. Section \ref{sec:experiments} presents the results of experiments where we measure the complexity of representations in pre-closed regional format and their degree of satisfiability of lattice property.

\section{Preliminaries: Some Neural Networks and Their Representations}
\label{sec:preliminaries}
Traditionally, a \emph{feedforward neural network} $N$ is given (and represented) by a graph whose nodes are partitioned into an ordered family of ordered sets $\mathcal{L}_N = \{ L_0,\ldots,L_\Lambda \}$, where each $L_i$ is a \emph{layer}. All nodes in layer $L_i$, for $i\in \{0,\ldots,\Lambda-1\}$, are linked by an edge to all nodes in layer $L_{i+1}$ establishing a computational circuit such that all output values of nodes in $L_i$ are input values to each node in $L_{i+1}$. There is a linear function $f_j^i:\mathbb{R}^{|L_{i-1}|}\to\mathbb{R}$ associated to each node $n_j^i$ in layer $L_i$, for $j\in \{1,\ldots,|L_i|\}$ and $i\in \{1,\ldots,\Lambda\}$. For a tuple of input values $\mathbf{x}=\tuple{x_1,\ldots,x_{|L_{i-1}|}}\in\mathbb{R}^{|L_{i-1}|}$ to node $n_j^i$, it has as output the value $n_j^i(\mathbf{x}) = \rho_i\circ f_j^i(\mathbf{x})$, where $\rho_i:\mathbb{R}\to\mathbb{R}$ is an \emph{activation function}. Thus, for $\mathbf{x}=\tuple{x_1,\ldots,x_{|L_{i-1}|}}$ as input to layer $L_i$, it has as outputs the values in the tuple
\[ L_i(\mathbf{x})=\tuple{\rho_i\circ f_1^i(\mathbf{x}),\ldots,\rho_i\circ f_{|L_i|}^i(\mathbf{x})}. \]
Input values to $N$ in a tuple $\mathbf{x}=\tuple{x_1,\ldots,x_{|L_0|}} \in \mathbb{R}^{|L_0|}$ are neatly associated to the nodes $n_1^0,\ldots,n_{|L_0|}^0 \in L_0$, called \emph{input nodes}; thus, from such inputs, $N$ produces the output values in the $|L_\Lambda|$-tuple
\[ N(\mathbf{x}) = \tuple{N(\mathbf{x})_1,\ldots,N(\mathbf{x})_{|L_\Lambda|}} = L_\Lambda\circ\cdots\circ L_1(\mathbf{x}). \]
Nodes in $L_\Lambda$ are called \emph{output nodes}. We say that each output node $n_j^\Lambda$ in a neural network $N$ computes a function for which the value $N(\mathbf{x})_j$ is given in function of the input values $\mathbf{x}\in\mathbb{R}^{|L_0|}$.

We might restrict the input values of a neural network to some set $R\subseteq\mathbb{R}$. In this work, we focus on \emph{ReLU--TId neural networks}: they accept input values from $[0,1]$ and have as activation functions the \emph{rectified linear unit} $\rho_i=\mathrm{ReLU}:\mathbb{R}\to\mathbb{R}$, given by $\mathrm{ReLU}(x)=\max(0,x)$, for $i\in\{1,\ldots,\Lambda-1\}$, and the \emph{truncated identity function} $\rho_\Lambda=\mathrm{TId}:\mathbb{R}\to\mathbb{R}$, given by $\mathrm{TId}(x) = \max(0,\min(1,x))$. Such activation functions may be given by piecewise linear definitions as follows:
\begin{align}\label{eq:activation-functions}
	\mathrm{ReLU}(x) = \left\{
	\begin{array}{ll}
		0, & x < 0 \\
		x, & x \geq 0
	\end{array} \right.
	\hspace{80pt}
	\mathrm{TId}(x) = \left\{
	\begin{array}{ll}
		0, & x < 0 \\
		x, & 0 \leq x \leq 1 \\
		1, & x > 1
	\end{array} \right.
\end{align}

\begin{example}\label{ex:example}
	Let $E$ be a ReLU--TId neural network with $\mathcal{L}_E = \{ L_0, L_1, L_2 \}$, where $L_0 = \{ n_1^0, n_2^0 \}$, $L_1 = \{ n_1^1, n_2^1 \}$ and $L_2 = \{ n_1^2 \}$. The graph of $E$ depicted in Figure~\ref{fig:example} highlights the input values $x_1$ and $x_2$ in layer $L_0$ and the functions $f_1^1,f_2^1,f_1^2:\mathbb{R}^2\to\mathbb{R}$ in layers $L_1$ and $L_2$, which are given by:
	\begin{itemize}
		\item $f_1^1(x_1,x_2) = \frac{4}{3}x_1 - x_2$;
		\item $f_2^1(x_1,x_2) = x_1 - x_2 + \frac{1}{2}$;
		\item $f_1^2(x_1,x_2) = x_1 + x_2 + \frac{1}{2}$.
	\end{itemize}
	For a tuple of inputs $\mathbf{e}=\tuple{\frac{1}{8},\frac{1}{2}}$ to $E$, we have $L_1(\mathbf{e})=\tuple{\mathrm{ReLU}(-\frac{1}{3}),\mathrm{ReLU}(\frac{1}{8})}=\tuple{0,\frac{1}{8}}$ and, thus, $E(\mathbf{e})=L_2(L_1(\mathbf{e}))=\mathrm{TId}(\frac{5}{8})=\frac{5}{8}$.
\end{example}

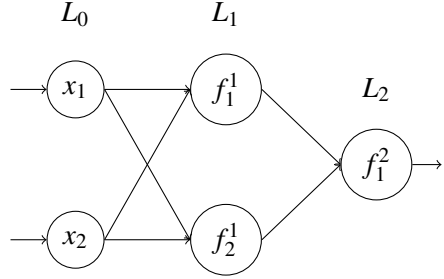
\begin{figure}
	\centering
	\begin{tikzpicture}[
		node/.style={circle, draw},
		clear/.style={draw=none, fill=none},
		]
		
		\node[clear]		(empty)							{};
		\node[node]			(input)		[right of=empty]	{$x_1$};
		\node[clear]        (cap)       [above of=input]    {$L_0$};
		\node[clear]		(empty2)	[right of=input]	{};
		\node[node]			(inner)		[right of=empty2]	{$f_1^1$};
		\node[clear]        (cap2)      [above of=inner]    {$L_1$};
		\node[clear]		(empty3)	[right of=inner]	{};
		\node[clear]		(empty4)	[below of=empty]	{};
		\node[clear]		(empty5)	[right of=empty4]	{};
		\node[clear]		(empty6)	[right of=empty5]	{};
		\node[clear]		(empty7)	[right of=empty6]	{};
		\node[clear]		(empty8)	[right of=empty7]	{};
		\node[node]			(output)	[right of=empty8]	{$f_1^2$};
		\node[clear]        (cap3)      [above of=output]   {$L_2$};
		\node[clear]		(empty-out)	[right of=output]	{};
		\node[clear]		(empty9)	[below of=empty4]	{};
		\node[node]			(input2)	[right of=empty9]	{$x_2$};
		\node[clear]		(empty10)	[right of=input2]	{};
		\node[node]			(inner2)	[right of=empty10]	{$f_2^1$};
		\node[clear]		(empty11)	[right of=inner2]	{};
		
		\draw[->] (empty.east)  -- (input.west);
		\draw[->] (input.east) -- (inner.west);
		\draw[->] (input.east) -- (inner2.west);
		\draw[->] (inner.east) -- (output.west);
		\draw[->] (empty9.east)  -- (input2.west);
		\draw[->] (input2.east) -- (inner.west);
		\draw[->] (input2.east) -- (inner2.west);
		\draw[->] (inner2.east) -- (output.west);
		\draw[->] (output.east) -- (empty-out.west);
	\end{tikzpicture}
	\caption{Graph of the ReLU--TId neural network $E$}
	\label{fig:example}
\end{figure}

Before introducing another type of neural network, let us define a \emph{rational McNaughton function} $f:[0,1]^n\to [0,1]$, which is a function that satisfies the following conditions:
\begin{itemize}
	\item $f$ is continuous with respect to the usual topology of $[0,1]$ as an interval of the real number line;
	\item There are linear polynomials $p_1,\ldots,p_m$ over $[0,1]^n$ with rational coefficients such that, for each point $\mathbf{x}\in [0,1]^n$, there is an index $i\in\{1,\ldots,m\}$ with $f(\mathbf{x})=p_i(\mathbf{x})$. Polynomials $p_1,\ldots,p_m$ are the \emph{linear pieces} of $f$.
\end{itemize}
A neural network whose $\nu=|L_\Lambda|$ output nodes exactly compute rational McNaughton functions in function of its input nodes is called a \emph{$\nu$-rational McNaughton neural network ($\nu$-RMcN\textsuperscript{3})}; a $1$-RMcN\textsuperscript{3} is also called \emph{rational McNaughton neural network (RMcN\textsuperscript{3})}.

A possibility to represent rational McNaughton functions (consequently, $\nu$-rational McNaughton neural networks) is through the \emph{regional formats} discussed in the following. Let $\inte{\Omega}$ denote the topological interior of $\Omega\subseteq [0,1]^n$ and say that, given functions $f,g:[0,1]^n\to [0,1]$, $f$ is \emph{above} $g$ over the set $\Omega$ if $f(\mathbf{x})\geq g(\mathbf{x})$, for all $\mathbf{x}\in\Omega$. A given rational McNaughton function $f:[0,1]^n\to [0,1]$ is said to be encoded in the \emph{closed regional format} if it is given by $m$ (not necessarily distinct) linear pieces
\begin{align}
	p_i(\mathbf{x})= \gamma_{i0} + \gamma_{i1}x_1 + \cdots + \gamma_{in}x_n,
\end{align}
where $\mathbf{x}=\tuple{x_1,\ldots,x_n}\in [0,1]^n$, $\gamma_{ij}\in\mathbb{Q}$ and $i\in\{1,\ldots,m\}$, such that each $p_i$ is identical to $f$ over a polyhedron $\Omega_i\subseteq [0,1]^n$, called \emph{region}. These regions are determined by the finite intersection of half-spaces given by linear inequalities\footnote{We occasionally abuse notation by using the same symbol to refer both to a set of inequalities and to the polyhedron it determines.} as
\begin{align}\label{eq:region}
	\Omega_i = \Big\{ \mathbf{x}\in [0,1]^n ~\Big|~ \omega_{j0} + \omega_{j1} x_1 + \cdots + \omega_{jn} x_n \geq 0,~ j\in\{1,\ldots,\lambda_{\Omega_i}\} \Big\}
\end{align}
and such setting of linear pieces and regions satisfy the following properties:
\begin{itemize}
	\item $\bigcup_{i=1}^m\Omega_i = [0,1]^n$;
	\item $\inte{\Omega_{i'}}\cap\inte{\Omega_{i''}}=\emptyset$, for $i'\neq i''$; and
	\item The \emph{lattice property}: for $i\neq j$, there is $k$ such that linear piece $p_i$ is above linear piece $p_k$ over region $\Omega_i$ and linear piece $p_k$ is above linear piece $p_j$ over region $\Omega_j$.
\end{itemize}
Such an encoding is called \emph{closed} because regions $\Omega_i$ are closed sets in the topological sense. As regions are given by such polyhedra described by \eqref{eq:region}, there is a polynomial procedure to establish whether a linear piece $p$ is above $q$ over region $\Omega$: find the minimum value $m$ of $p-q$ over $\Omega$, which is a linear program and may be solved in polynomial time \cite{BT1997}; then, if $m\geq 0$, $p$ is above $q$ over $\Omega$, otherwise, it is not.

The lattice property yields the possibility to represent rational McNaughton functions and $\nu$-RMcN\textsuperscript{3}s by \emph{lattice representations}---i.e., based on operations of maximum and minimum over functions---as follows. First, let $f_{\Omega_j}:[0,1]^n\to [0,1]$ be the function given by
\[ f_{\Omega_j}(\mathbf{x}) = \min\Big\{ p_k(\mathbf{x}) ~\Big|~ \text{$p_k$ is above $p_j$ over $\Omega_j$} \Big\}. \]
Note that $f_{\Omega_j}(\mathbf{x})\leq f(\mathbf{x})$, for all $\mathbf{x}\in [0,1]^n$, which is obvious for $\mathbf{x}\in\Omega_j$ and follows from the lattice property for $\mathbf{x}\in\Omega_i$ where $i\neq j$. In this way, we have that
\[ f(\mathbf{x}) = \max\Big\{ f_{\Omega_j}(\mathbf{x}) ~\Big|~ j\in\{1,\ldots,m\} \Big\}. \]
In \cite{PF2023b}, we find an example of the one-variable rational McNaughton function $f:[0,1]\to [0,1]$, whose graph is depicted in Figure \ref{fig:example-lattice}. Function $f$ has a lattice representation given by
\[ f(x) = \max\Big\{~ f_{\Omega_1}(x),~ f_{\Omega_2}(x),~ f_{\Omega_3}(x),~ f_{\Omega_4}(x)~ \Big\}, \]
where $f_{\Omega_1}(x)=\min\{p_1(x),p_3(x)\}$, $f_{\Omega_2}(x)=f_{\Omega_3}(x)=\min\{p_2(x),p_3(x)\}$ and $f_{\Omega_4}(x)=\min\{p_2(x),p_4(x)\}$. Note that the lattice encoding just introduced may be employed in representing piecewise linear functions in general, not only rational McNaughton functions.

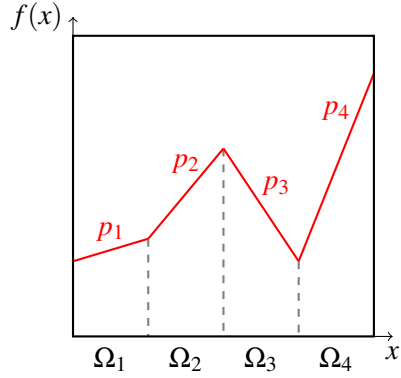
\begin{figure}
	\centering
	\begin{tikzpicture}
		\draw[gray,dashed,thick] (1,1.3)--(1,0);
		\draw[red,thick] (0,1)--(1,1.3)
		node at (0.5,1.4) {$p_1$};
		\draw[gray,dashed,thick] (2,2.5)--(2,0);
		\draw[red,thick] (1,1.3)--(2,2.5)
		node at (1.5,2.3) {$p_2$};
		\draw[gray,dashed,thick] (3,1)--(3,0);
		\draw[red,thick] (2,2.5)--(3,1)
		node at (2.7,2) {$p_3$};
		\draw[red,thick] (3,1)--(4,3.5)
		node at (3.5,3) {$p_4$};
		\draw[thick] (0,0) rectangle (4,4);
		\draw[->] (0,0)--(0,4+1/4)
		node[left] {$f(x)$};
		\draw[->] (0,0)--(4+1/4,0)
		node at (0.5,-0.3) {$\Omega_1$}
		node at (1.5,-0.3) {$\Omega_2$}
		node at (2.5,-0.3) {$\Omega_3$}
		node at (3.5,-0.3) {$\Omega_4$}
		node[below] {$x$};
	\end{tikzpicture}
	\caption{One-variable piecewise linear function}
	\label{fig:example-lattice}
\end{figure}

When a rational McNaughton function is given in an encoding that almost completely agrees with the closed regional format, with the sole exception that there is no guarantee that such encoding satisfies the lattice property (although it may still satisfy), we say that it is in the \emph{pre-closed regional format}.

Unfortunately, lack of lattice property might entail the failure of lattice representation as in the following example taken from \cite{PF2022}. The rational McNaughton function $f_E$ with graph in Figure \ref{fig:graph-ce} may have an encoding based on regions in Figure \ref{fig:regions-ce}; a linear piece $p_i$ is associated to each region $\Omega_i$. The dotted line in Figure \ref{fig:regions-ce} is the projection over $[0,1]^2$ of where $p_3$ intercepts $p_5$; note that such line passes through the interior of both $\Omega_3$ and $\Omega_5$. There is no linear piece $p_k$ such that $p_3$ is above $p_k$ over $\Omega_3$ and $p_k$ is above $p_5$ over $\Omega_5$. So an encoding for $f_E$ based on regions $\Omega_1$--$\Omega_5$ may be at most encoded in pre-closed regional format.

Now, there is $\mathbf{x}_0\in\Omega_3^\circ$ such that $p_5(\mathbf{x}_0)>p_3(\mathbf{x}_0)$. Therefore, $f_{\Omega_5}(\mathbf{x}_0) = p_5(\mathbf{x}_0) > p_3(\mathbf{x}_0) = \min\{ p_1(\mathbf{x}_0), p_3(\mathbf{x}_0) \} = f_{\Omega_3}(\mathbf{x}_0)$, yielding that $\max \{f_{\Omega_j}(\mathbf{x}_0)\} > f_{\Omega_3}(\mathbf{x}_0)$, which eliminates the possibility of lattice representation. Such an issue may be circumvented by splitting region $\Omega_5$, according to the dotted line in Figure \ref{fig:regions-ce}, in regions $\Omega'_5 = \Omega_5\cap\{p_5-p_3\geq0\}$ and $\Omega''_5 = \Omega_5\cap\{p_5-p_3\leq0\}$. In general, repeatedly splitting a region according to projections of linear pieces intersections eventually achieves closed regional format \cite[Theorem 7]{PF2022}.

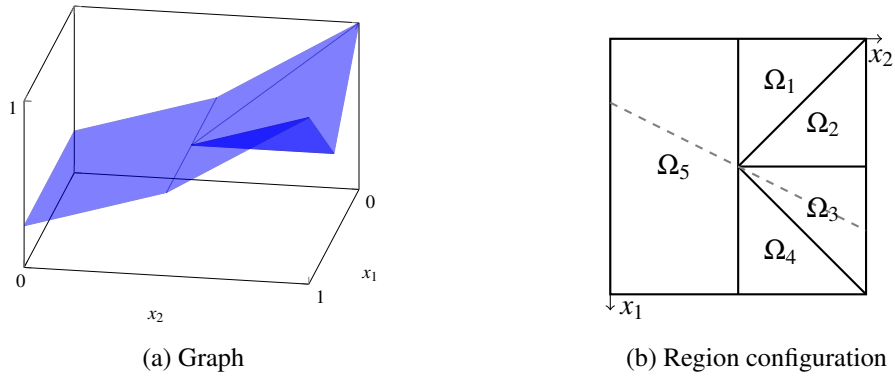
\begin{figure}
	\centering
	\begin{subfigure}{0.45\textwidth}
		\centering
		\begin{tikzpicture}[scale=.65]
			\begin{axis}[view/h=100, xlabel=$x_1$, ylabel=$x_2$, xmin=0, xmax=1, ymin=0, ymax=1, zmin=0, zmax=1, xtick={0,1}, ytick={0}, ztick={1}]
				\addplot3[opacity=0.5,table/row sep=\\,patch,patch type=polygon,vertex count=4,patch table with point meta={
					0 1 3 2 0\\
					2 4 5 2\\
					5 4 7 5\\
					6 7 4 6\\
					3 4 6 3\\
				}]
				table {
					x y z\\
					0 0 0.25\\
					1 0 0.25\\
					0 0.5 0.5\\
					1 0.5 0.5\\
					0.5 0.5 0.5\\
					0 1 1\\
					1 1 1\\
					0.5 1 0.5\\
				};
			\end{axis}
		\end{tikzpicture}
		\caption{Graph}
		\label{fig:graph-ce}
	\end{subfigure}
	~
	\begin{subfigure}{0.45\textwidth}
		\centering
		\begin{tikzpicture}[scale=.85]
			\draw[thick] (0,0) rectangle (4,4);
			\draw[->] (0,4)--(0,-1/4)
			node[right] {$x_1$};
			\draw[->] (0,4)--(4+1/4,4)
			node[below] {$x_2$};
			\draw[thick] (2,0)--(2,4);
			\draw[thick] (4,0)--(2,2);
			\draw[thick] (4,2)--(2,2);
			\draw[thick] (2,2)--(4,4);
			\draw[gray,dashed,thick] (0,3)--(4,1);
			\node at (8/3,2/3) {$\Omega_4$};
			\node at (10/3,4/3) {$\Omega_3$};
			\node at (10/3,8/3) {$\Omega_2$};
			\node at (8/3,10/3) {$\Omega_1$};
			\node at (1,2) {$\Omega_5$};
		\end{tikzpicture}
		\caption{Region configuration}
		\label{fig:regions-ce}
	\end{subfigure}
	\caption{Function encoded in the pre-closed regional format}
	\label{fig:graph-regions-ce}
\end{figure}

We may also represent rational McNaughton functions and $\nu$-RMcN\textsuperscript{3}s in logical systems. For that, let us introduce the \Luka\ infinitely-valued logic (\Linf). The basic language $\mathcal{L}$ of \Linf\ comprehends formulas freely generated from a countable set of propositional variables $\mathbb{P}$, a disjunction operator $\oplus$ and a negation operator $\lnot$. A \emph{valuation} is a function $v: \mathcal{L} \to [0,1]$, such that, for $\phi,\psi\in\mathcal{L}$:
\begin{align}
	v(\phi \oplus \psi) &= \min( 1, v(\phi) + v(\psi) );\label{eq:oplusval} \\
	v( \lnot \phi) &= 1-v(\phi).\label{eq:negval}
\end{align}
From disjunction and negation we derive the following operators:
\begin{align*}
	\textrm{Conjunction: } & \phi \odot \psi =_\mathrm{def} \lnot(\lnot\phi \oplus \lnot\psi) & v(\phi \odot \psi) &= \max(0, v(\phi)+v(\psi) - 1)	\\
	\textrm{Implication: } & \phi \to \psi =_\mathrm{def} \lnot\phi \oplus \psi & v(\phi \to \psi) &= \min(1, 1-v(\phi)+v(\psi))	\\
	\textrm{Maximum: } & \phi \lor \psi =_\mathrm{def} \lnot(\lnot\phi \oplus \psi) \oplus \psi & v(\phi \lor \psi) &= \max(v(\phi),v(\psi))	\\
	\textrm{Minimum: } & \phi \land \psi =_\mathrm{def} \lnot(\lnot\phi \lor \lnot\psi) & v(\phi \land \psi) &= \min(v(\phi), v(\psi))	\\
	\textrm{Bi-implication: } & \phi \leftrightarrow \psi =_\mathrm{def} (\phi \to \psi) \land (\psi \to \phi) & v(\phi \leftrightarrow \psi) &= 1-|v(\phi)-v(\psi)|
\end{align*}
Note that, as lattice operations are expressed in \Linf\ by the minimum and maximum operators, piecewise linear functions might have a lattice representation in \Linf\ as far as their linear pieces are representable in this system. Indeed, the formulas of \Linf\ represent all the \emph{McNaughton functions}, which are rational McNaughton functions constrained to allow only integer coefficients in their linear pieces \cite{McN1951,Mun1994}.

Unfortunately, \Linf\ cannot express rational McNaughton functions. For that, one possible path is to extend the language of \Linf, which is done, for instance, by \cite{Ger2001}. Another possibility is to implicitly represent such functions in plain \Linf\ using the technique of \emph{representation modulo satisfiability}, which we introduce in the following \cite{FP2020,PF2020,PF2022}.

Let us denote the \Linf-\emph{semantics}, that is the set of all valuations, by $\mathbf{Val}$. Let us also denote by $\mathbf{Val}_\Phi$ the set of valuations $v\in\mathbf{Val}$ that satisfy a set of formulas $\Phi$; we call such a restricted set of valuations a \emph{semantics modulo satisfiability}. Given a rational McNaughton function $f:[0,1]^n\to[0,1]$, a formula $\phi_f$ and a set of formulas $\Phi_f$, we say that \emph{$\phi_f$ represents $f$ modulo $\Phi_f$-satisfiable} or that the pair $\langle\phi_f,\Phi_f\rangle$ \emph{represents $f$ (in the system \Linf-MODSAT)} if, for distinguished propositional variables $X_1,\ldots,X_n\in\mathbb{P}$:
\begin{itemize}
	\item For all $\langle x_1,\ldots,x_n\rangle\in[0,1]^n$, there exists some valuation $v\in\mathbf{Val}_{\Phi_f}$, such that $v(X_i)=x_i$, for $i=1,\ldots,n$;
	\item For all valuations $v,v'\in\mathbf{Val}_{\Phi_f}$ such that $v(X_i)=v'(X_i)$, for $i=1,\ldots,n$, we have $v(\phi_f)=v'(\phi_f)$; and
	\item $f(v(X_1),\ldots,v(X_n)) = v(\phi_f)$, for all $v\in\mathbf{Val}_{\Phi_f}$.
\end{itemize}

As an example, any constant function that takes value $\frac{1}{b}$, with $b\in\mathbb{N}^*$, may be represented by the pair
\begin{align}\label{eq:constant-representation}
	\tuple{\phi,\Phi} = \Big\langle ~~ Z_\frac{1}{b}, ~~ \Big\{ Z_\frac{1}{b}\leftrightarrow\neg(b-1)Z_\frac{1}{b} \Big\} ~~ \Big\rangle,
\end{align}
where formula $\phi$ is only the propositional variable $Z_{\sfrac{1}{b}}$ and set $\Phi$ is a singleton comprehending formula $Z_{\sfrac{1}{b}}\leftrightarrow\neg(b-1)Z_{\sfrac{1}{b}}$, which we denote by $\phi_{\sfrac{1}{b}}$. In fact, for any valuation $v\in\mathbf{Val}_{\phi_{\sfrac{1}{b}}}\neq\emptyset$, we have that $v(Z_{\sfrac{1}{b}})=\frac{1}{b}$. Also, functions that take constant value $\frac{a}{b}$, with $a\in\mathbb{N}$, may be represented by the pair $\tuple{a\phi,\Phi}$.

Any rational McNaughton function may be represented in \Linf-MODSAT. Moreover, there is a polynomial algorithm for the translation from a rational McNaughton function in closed regional format to its representation in such system \cite{PF2020,PF2022}.

\section{Neural Networks into Pre-Closed Regional Format}
\label{sec:nn2pwl}
Given a ReLU--TId neural network $N$ for which $\mathcal{L}_N = \{ L_0,\ldots,L_\Lambda \}$, we provide an algorithm to translate it into a tuple $\Xi_N = \tuple{\Xi_1,\ldots,\Xi_{|L_\Lambda|}}$, where each $\Xi_k$, $k\in\{1,\ldots,|L_\Lambda|\}$, is the codification for a rational McNaughton function in pre-closed regional format, which we will show to be the function computed by $N$ through the path to its $k$-th output node. Each $\Xi_k$ is a set of pairs $\tuple{p,\Omega}$, where $p$ is a linear piece and $\Omega$ is its associated region.

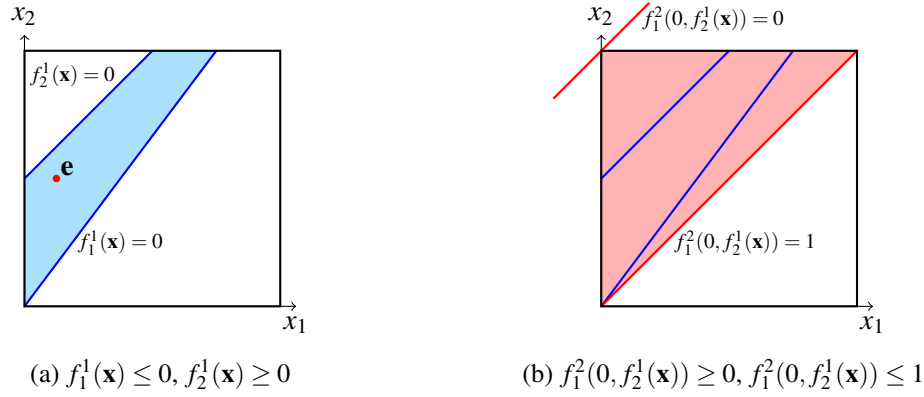
\begin{figure}
	\centering
	\begin{subfigure}{0.45\textwidth}
		\centering
		\begin{tikzpicture}[scale=.85]
			\draw[fill=cyan!30,ultra thin] (0,0)--(3,4)--(2,4)--(0,2)--cycle;
			\draw[->] (0,0)--(0,4+1/4)
			node[above] {$x_2$};
			\draw[->] (0,0)--(4+1/4,0)
			node[below] {$x_1$};
			\draw[blue,thick] (0,0)--(3,4);
			\node at (1+1/2,1) {\scriptsize $f_1^1(\mathbf{x})=0$};
			\draw[blue,thick] (0,2)--(2,4);
			\node at (1/2+1/4,3+1/2+1/8) {\scriptsize $f_2^1(\mathbf{x})=0$};
			\node at (1/2,2) [red,circle,fill,inner sep=1pt] {};
			\node at (1/2+1/6,2+1/6) {$\mathbf{e}$};
			\draw[thick] (0,0) rectangle (4,4);
		\end{tikzpicture}
		\caption{$f_1^1(\mathbf{x})\leq 0$, $f_2^1(\mathbf{x})\geq 0$}
		\label{fig:comp-inner-node}
	\end{subfigure}
	~
	\begin{subfigure}{0.45\textwidth}
		\centering
		\begin{tikzpicture}[scale=.85]
			\draw[fill=red!30,ultra thin] (0,0)--(4,4)--(0,4)--cycle;
			\draw[->] (0,0)--(0,4+1/4)
			node[above] {$x_2$};
			\draw[->] (0,0)--(4+1/4,0)
			node[below] {$x_1$};
			\draw[blue,thick] (0,0)--(3,4);
			\draw[blue,thick] (0,2)--(2,4);
			\draw[red,thick] (0,0)--(4,4);
			\node at (2+1/4,1) {\scriptsize $f_1^2(0,f_2^1(\mathbf{x}))=1$};
			\draw[red,thick] (-1/2-1/4,-1/2-1/4+4)--(1/2+1/4,1/2+1/4+4);
			\node at (1+3/4,4+1/2) {\scriptsize $f_1^2(0,f_2^1(\mathbf{x}))=0$};
			\draw[thick] (0,0) rectangle (4,4);
		\end{tikzpicture}
		\caption{$f_1^2(0,f_2^1(\mathbf{x}))\geq 0$, $f_1^2(0,f_2^1(\mathbf{x}))\leq 1$}
		\label{fig:comp-last-node}
	\end{subfigure}
	\caption{Determining a region for neural network $E$}
\end{figure}

Let us begin by analyzing the computation that takes place in each node of $N$. Given a tuple of input values $\mathbf{v}\in\mathbb{R}^{|L_{i-1}|}$ to node $n_j^i$ of $N$, where $i\in\{1,\ldots,\Lambda-1\}$ and so $\rho_i=\mathrm{ReLU}$, the computation proceeds in two steps: first, the linear function $f_j^i(\mathbf{x})$ is evaluated for $\mathbf{x}=\mathbf{v}$; second, the activation function $\mathrm{ReLU}(x)$ is evaluated for $x=f_j^i(\mathbf{v})$. In the second step, one from the two possible values highlighted in the piecewise definition of $\mathrm{ReLU}$ in \eqref{eq:activation-functions} is chosen as the output of $n_j^i$. Such choice depends on the position of point $\mathbf{v}\in\mathbb{R}^{|L_{i-1}|}$ in relation to the hyperplane given by equation $f_j^i(\mathbf{x}) = 0$, since it may be a point lying either in the half-space given by $f_j^i(\mathbf{x}) \leq 0$ or in the half-space given by $f_j^i(\mathbf{x}) \geq 0$. For instance, for $\mathbf{e}=\tuple{\frac{1}{8},\frac{1}{2}}$ as input to the node $n_1^1$ of $E$ in Example \ref{ex:example}, the fact that $f_1^1(\mathbf{e}) \leq 0$ indicates that $\mathbf{e}$ lies in one of the half-spaces determined by $f_1^1(\mathbf{x}) = 0$ where, according to $\rho_1=\mathrm{ReLU}$, $n_1^1$ outputs $0$. On the other hand, if $\mathbf{e}$ is given as input to the node $n_2^1$, as $f_2^1(\mathbf{e}) \geq 0$, $\mathbf{e}$ lies in the half-space determined by $f_2^1(\mathbf{x})=0$ where, according to $\rho_1=\mathrm{ReLU}$, $n_2^1$ outputs $f_2^1(\mathbf{e})=\frac{1}{8}$. Figure \ref{fig:comp-inner-node} depicts the position of $\mathbf{e}$ in relation to these hyperplanes.

Similarly, in case $i=\Lambda$, we have that $\rho_i=\mathrm{TId}$ and, then, one from three possible values is chosen for $\mathrm{TId}(f_j^i(\mathbf{v}))$ depending on the position of $\mathbf{v}\in\mathbb{R}^{|L_{\Lambda-1}|}$ in relation to the hyperplanes $f_j^\Lambda(\mathbf{x})=0$ and $f_j^\Lambda(\mathbf{x})=1$. It may be a point lying either in the half-space given by $f_j^\Lambda(\mathbf{x})\leq 0$, or in the half-space given by $f_j^\Lambda(\mathbf{x})\geq 1$ or in the intersection of half-spaces $f_j^\Lambda(\mathbf{x})\geq 0$ and $f_j^\Lambda(\mathbf{x})\leq 1$. Again, for $\mathbf{e}=\tuple{\frac{1}{8},\frac{1}{2}}$ as input to the neural network $E$ in Example \ref{ex:example}, we have $\mathbf{e}_1=L_1(\mathbf{e})=\tuple{0,\frac{1}{8}}$. Since $f_1^2(\mathbf{e}_1)\geq 0$ and $f_1^2(\mathbf{e}_1)\leq 1$, $\mathbf{e}_1$ lies in a position relative to $f_1^2(\mathbf{x})=0$ and $f_1^2(\mathbf{x})=1$ where, according to $\rho_2=\mathrm{TId}$, $n_1^2$ outputs $f_1^2(\mathbf{e}_1)=\frac{5}{8}$.

Now, still considering Example \ref{ex:example}, let $\mathbf{x}\in [0,1]^{|L_0|}$ be any point that, as $\mathbf{e}=\tuple{\frac{1}{8},\frac{1}{2}}$, satisfies both inequalities
\begin{align}\label{eq:region-eq1}
	f^1_1(\mathbf{x})\leq 0 ~~~~\text{and}~~~~ f^1_2(\mathbf{x})\geq 0. 
\end{align}
Then, to $\mathbf{x}_1=L_1(\mathbf{x})$ satisfy inequalities $f^2_1(\mathbf{x}_1)\geq 0$ and $f^2_1(\mathbf{x}_1)\leq 1$ is equivalent to $\mathbf{x}$ satisfy inequalities
\begin{align}\label{eq:region-eq2}
	f^2_1(0,f^1_2(\mathbf{x}))\geq 0 ~~~~\text{and}~~~~ f^2_1(0,f^1_2(\mathbf{x}))\leq 1.
\end{align}
As the former inequalities \eqref{eq:region-eq1}, the latter inequalities \eqref{eq:region-eq2} are also linear over tuples from $[0,1]^{|L_0|}$ of input values to the neural network $E$. Moreover, for $\mathbf{x}\in[0,1]^{|L_0|}$ satisfying inequalities \eqref{eq:region-eq1} and \eqref{eq:region-eq2}, we have that $E(\mathbf{x})=f^2_1(0,f^1_2(\mathbf{x}))$. In this way, we have just devised a region and its associated linear piece for the rational McNaughton function computed by neural network $E$.

Generalizing the observations above, the idea behind the base algorithm for building $\Xi_k$, for $k\in\{1,\ldots, |L_\Lambda|\}$, is to compute each pair $\tuple{p,\Omega}\in\Xi_k$ beginning by: associating a symbol between $\leq$ and $\geq$ to each node $n_j^i$, for $i<\Lambda$, alluding to one of the two possible positions of an input to $n_j^i$ in relation to the hyperplane $f_j^i(\mathbf{x})=0$---i.e., lying in the half-space $f_j^i(\mathbf{x})\leq 0$ or in the half-space $f_j^i(\mathbf{x})\geq 0$\mbox{---;} and associating a symbol among $\leq$, $\geq$ and $\lessgtr$ to the node $n_k^\Lambda$ alluding to one of the three possible positions of an input to $n_k^\Lambda$ in relation to the hyperplanes $f_k^\Lambda(\mathbf{x})=0$ and $f_k^\Lambda(\mathbf{x})=1$---i.e., lying in the half-space $f_k^\Lambda(\mathbf{x})\leq 0$, or in the half-space $f_k^\Lambda(\mathbf{x})\geq 1$ or in the intersection of the half-spaces $f_k^\Lambda(\mathbf{x})\geq 0$ and $f_k^\Lambda(\mathbf{x})\leq 1$. These associations of symbols to all the nodes in layers $L_1,\ldots,L_\Lambda$ determine a \emph{configuration of symbols}. Then, the algorithm proceeds by defining $\Omega\subseteq[0,1]^{|L_0|}$ as an intersection of half-spaces based on such configuration of symbols and establishing a linear expression for $p$ such that $N(\mathbf{x})_k=p(\mathbf{x})$, for $\mathbf{x}\in\Omega$.

\begin{example}\label{ex:example-region}
	For the neural network $E$ in Example \ref{ex:example}, in a configuration of symbols where we associate $\leq$ to $n_1^1$ and $\geq$ to $n_2^1$, the consequent region $\Omega$ should comprehend the inequalities $\frac{4}{3}x_1-x_2\leq 0$ and $x_1-x_2+\frac{1}{2}\geq 0$ (shaded area in Figure \ref{fig:comp-inner-node}). For an input $\mathbf{x}\in[0,1]^2$ that satisfies these inequalities, we have the outputs $n_1^1(\mathbf{x}) = 0$ and $n_2^1(\mathbf{x}) = f_2^1(\mathbf{x})$ which, composed with $f_1^\Lambda$, gives us the expression $x_1-x_2+1$. In this way, in case we complete the configuration of symbols by associating $\leq$ to $n_1^\Lambda$, $\Omega$ should comprehend the inequality $x_1-x_2+1\leq 0$ and $p$ should be given by $p(x_1,x_2) = 0$. In case we associate $\geq$ to $n_1^\Lambda$, $\Omega$ should comprehend the inequality $x_1-x_2+1\geq 1$ and $p$ should be given by $p(x_1,x_2) = 1$. In the last case, if we associate $\lessgtr$ to $n_1^\Lambda$, $\Omega$ should comprehend both inequalities $x_1-x_2+1\leq 1$ and $x_1-x_2+1\geq 0$ (shaded area in Figure \ref{fig:comp-last-node}) and $p$ should be given by $p(x_1,x_2)=x_1-x_2+1$. Note that the last case is the only one where region $\Omega$ would be non-empty.
\end{example}

In order to build the entire representative tuple $\Xi_N = \tuple{\Xi_1,\ldots,\Xi_{|L_\Lambda|}}$, the algorithm needs to compute all the pairs $\tuple{p,\Omega}$, each one associated to a different configuration of symbols, for all possible configurations of symbols. Thus, the entire computation of $\Xi_N$ ends up with $2^{|L_1|} \times \cdots \times 2^{|L_{\Lambda-1}|} \times 3|L_\Lambda|$ pairs $\tuple{p,\Omega}$. Later, we introduce methods meant to be combined with the base algorithm that might circumvent such high complexity.

For establishing the base translation algorithm, we first fix some notation. Let $\kappa_0^n$ and $\kappa_1^n$ be the constant linear functions with domain $[0,1]^n$ and ranges equal to $\{0\}$ and $\{1\}$, respectively; and let $\chi_n:\{\leq,\geq\}\to\{\kappa_0^n,\kappa_1^n\}$ be the functions given by $\chi_n(\leq)=\kappa_0^n$ and $\chi_n(\geq)=\kappa_1^n$. Also, let $\pi_n^m:[0,1]^m\to\mathbb{R}$ be the projection functions given by $\pi_n^m(x_1,\ldots,x_m) = x_n$, for $m\in\mathbb{N}$ and $1\leq n\leq m$. The base translation algorithm is split into Algorithms \ref{alg:nn2pwl} and \ref{alg:nn2pwl-inner}.

Algorithm~\ref{alg:nn2pwl} treats the tuple $\Xi_N=\tuple{\Xi_1,\ldots,\Xi_{|L_\Lambda|}}$ as a variable to be updated as it runs; thus, it first sets each of the $\Xi_1,\ldots,\Xi_{|L_\Lambda|}$ to the empty set as their initial values (lines \ref{alg:nn2pwl-initial-empty1} and \ref{alg:nn2pwl-initial-empty2}). Then, it defines $\Omega_{[0,1]}$ as a set of inequalities common to all regions (line \ref{alg:nn2pwl-generic-region}) and $\pi$ as a tuple of projection functions (line \ref{alg:nn2pwl-projection-tuple}), which will be suitable for compositions with functions $f_j^1$ related to the first layer $L_1$. It proceeds by calling the recursive routine $\textsc{nn2pwl-r}(\Xi_N \parallel L_1,\Omega,\pi)$ (line \ref{alg:nn2pwl-recursive-call}), where $\Xi_N$ is an argument \textbf{passed by reference}, which means that whenever \textsc{nn2pwl-r} modifies the value of $\Xi_N$, it will also be modified in the scope of the calling function. Finally, Algorithm \ref{alg:nn2pwl} returns $\Xi_N$ with its final value (line \ref{alg:nn2pwl-return}).

Algorithm \ref{alg:nn2pwl-inner} describes the recursive routine \textsc{nn2pwl-r} that has as inputs a tuple $\Xi_N=\tuple{\Xi_1,\ldots,\Xi_{|L_\Lambda|}}$ to be updated, a ReLU--TId neural network $N$ with a distinguished layer $L_i\in\mathcal{L}_N$, a set of inequalities $\Omega$ and a tuple of functions $f=\tuple{f_1,\ldots,f_{|L_{i-1}|}}$. If $L_i\neq L_\Lambda$, for all possible association of symbols $\leq$ and $\geq$ to the nodes $n_1^i,\ldots,n_{|L_i|}^i$ summarized in the tuple of symbols $\bowtie=\tuple{\bowtie_1,\ldots,\bowtie_{|L_i|}}$, $\textsc{nn2pwl-r}(\Xi_N \parallel L_i,\Omega,f)$ proceeds by:
\begin{itemize}
	\item Computing $\Omega^i_{\bowtie}$ as $\Omega$ extended by the half-spaces $f_j^i\circ f(\mathbf{x}) \bowtie_j 0$, for $j\in\{1,\ldots,|L_i|\}$, where $f=\tuple{f_1,\ldots,f_{|L_{i-1}|}}$ is a tuple of linear functions such that $f(\mathbf{x})=\tuple{f_1(\mathbf{x}),\ldots,f_{|L_{i-1}|}(\mathbf{x})} = L_{i-1}\circ\cdots\circ L_1(\mathbf{x})$, for $\mathbf{x}\in\Omega$ (line \ref{alg:nn2pwl-inner-halfspaces1});
	\item Computing the tuple of linear functions $f^i_{\bowtie}$ that is identical to the output of $L_i\circ\cdots\circ L_1$ for inputs $\mathbf{x}\in\Omega^i_{\bowtie}$, with assistance of functions $\chi_{|L_0|}$ (line \ref{alg:nn2pwl-inner-functions1});
	\item And calling itself again by $\textsc{nn2pwl-r}(\Xi_N \parallel L_{i+1},\Omega^i_{\bowtie},f^i_{\bowtie})$ (line \ref{alg:nn2pwl-inner-recursive-call1}).
\end{itemize}
If $L_i=L_\Lambda$, for each of the output nodes $n_1^\Lambda,\ldots,n_{|L_\Lambda|}^\Lambda$, $\textsc{nn2pwl-r}(\Xi_N \parallel L_i,\Omega,f)$ proceeds by:
\begin{itemize}
	\item Computing $\Omega_{\leq}$, $\Omega_{\geq}$ and $\Omega_{\lessgtr}$ as $\Omega$ extended, respectively, by the half-space $f_k^\Lambda\circ f(\mathbf{x}) \leq 0$, the half-space $f_k^\Lambda\circ f(\mathbf{x}) \geq 1$ and the pair of half-spaces $f_k^\Lambda\circ f(\mathbf{x}) \geq 0$ and $f_k^\Lambda\circ f(\mathbf{x}) \leq 1$, where $f$ is a tuple of linear functions such that $f(\mathbf{x}) = L_{\Lambda-1}\circ\cdots\circ L_1(\mathbf{x})$, for $\mathbf{x}\in\Omega$ (lines \ref{alg:nn2pwl-inner-omega1}, \ref{alg:nn2pwl-inner-omega2} and \ref{alg:nn2pwl-inner-omega3});
	\item And rewriting $\Xi_k$ by adding the pairs $\tuple{\kappa_0^{|L_0|}, \Omega_{\leq}}$, $\tuple{f_k^\Lambda\circ f, \Omega_{\lessgtr}}$ and $\tuple{\kappa_1^{|L_0|}, \Omega_{\geq}}$ to it (lines \ref{alg:nn2pwl-inner-xi1}, \ref{alg:nn2pwl-inner-xi2} and \ref{alg:nn2pwl-inner-xi3}).
\end{itemize}

Let $\Omega$ be any region appearing in the output of the base algorithm; it is built in $\Lambda+1$ steps in a way that, in each step, new inequalities are added to a polyhedron (identified with a set of inequalities) until it becomes $\Omega$. The first step adds the inequalities that determine $[0,1]^{|L_0|}$ (Algorithm~\ref{alg:nn2pwl}, line \ref{alg:nn2pwl-generic-region}). The next $\Lambda-1$ steps, where the produced polyhedra are named $\Omega^i_{\bowtie}$ (Algorithm~\ref{alg:nn2pwl-inner}, line \ref{alg:nn2pwl-inner-halfspaces1}), are associated to layers $L_1,\ldots,L_{\Lambda-1}$ of $N$. The final step, where the produced region $\Omega$ is named either as $\Omega_{\leq}$, $\Omega_{\geq}$ or $\Omega_{\lessgtr}$ (Algorithm~\ref{alg:nn2pwl-inner}, line~\ref{alg:nn2pwl-inner-omega1}, \ref{alg:nn2pwl-inner-omega2} or \ref{alg:nn2pwl-inner-omega3}), is associated to layer $L_\Lambda$.

\begin{algorithm}
	\caption{\textsc{nn2pwl}: puts neural networks in the closed regional format}\label{alg:nn2pwl}
	
	\textbf{Input:} A ReLU--TId neural network $N$ for which $\mathcal{L}_N = \{ L_0,\ldots,L_\Lambda \}$.
	
	\textbf{Output:} A set $\Xi_N$ representing rational McNaughton functions computed by the output nodes of $N$.
	
	\begin{algorithmic}[1]
		\STATE $\Xi_1 := \emptyset$, \ldots, $\Xi_{|L_\Lambda|} := \emptyset$; \label{alg:nn2pwl-initial-empty1}
		\STATE $\Xi_N := \tuple{\Xi_1,\ldots,\Xi_{|L_\Lambda|}}$; \label{alg:nn2pwl-initial-empty2}
		\STATE $\Omega_{[0,1]} := \{ x_1 \geq 0, x_1 \leq 1, \ldots, x_{|L_0|}\geq 0, x_{|L_0|}\leq 1 \}$; \label{alg:nn2pwl-generic-region}
		\STATE $\pi := \tuple{\pi_1^{|L_0|},\ldots,\pi_{|L_0|}^{|L_0|}}$; \label{alg:nn2pwl-projection-tuple}
		\STATE $\textsc{nn2pwl-r}(\Xi_N \parallel L_1,\Omega_{[0,1]},\pi)$; \label{alg:nn2pwl-recursive-call}
		\RETURN $\Xi_N$; \label{alg:nn2pwl-return}
	\end{algorithmic}
\end{algorithm}

\begin{algorithm}
	\caption{\textsc{nn2pwl-r}: recursive routine called by \textsc{nn2pwl}}\label{alg:nn2pwl-inner}
	
	\textbf{Input:} A tuple $\Xi_N=\tuple{\Xi_1,\ldots,\Xi_{|L_\Lambda|}}$, a ReLU-TId neural network $N$, for which $\mathcal{L}_N = \{ L_0,\ldots,L_\Lambda \}$, with a distinguished layer $L_i\neq L_0$, a set of inequalities $\Omega$ and a tuple of linear functions $f = \tuple{f_1,\ldots,f_{|L_{i-1}|}}$.
	
	\begin{algorithmic}[1]
		\IF{$L_i\neq L_\Lambda$}\label{alg:nn2pwl-inner-firstIf}
		\FOR{$\bowtie \in \{ \leq,\geq \}^{|L_i|}$} \label{alg:nn2pwl-inner-For}
		\STATE $\Omega^i_{\bowtie} := \Omega \cup \{ f_j^i\circ f(\mathbf{x}) \bowtie_j 0 ~|~ j=1,\ldots,|L_i| \}$; \label{alg:nn2pwl-inner-halfspaces1}
		\STATE $f^i_{\bowtie} := \tuple{\chi_{|L_0|}(\bowtie_1) \cdot (f_1^i\circ f), \ldots, \chi_{|L_0|}(\bowtie_{|L_i|}) \cdot (f_{|L_i|}^i\circ f)}$; \label{alg:nn2pwl-inner-functions1}
		\STATE $\textsc{nn2pwl-r}(\Xi_N \parallel L_{i+1},\Omega^i_{\bowtie},f^i_{\bowtie})$; \label{alg:nn2pwl-inner-recursive-call1}
		\ENDFOR \label{alg:nn2pwl-inner-For-end}
		\ELSE
		\FOR {$k = 1,\ldots,|L_\Lambda|$}
		\STATE $\Omega_{\leq} := \Omega \cup \{ f_k^\Lambda\circ f(\mathbf{x}) \leq 0 \}$; \label{alg:nn2pwl-inner-omega1}
		\STATE $\Xi_k := \Xi_k \cup \{ \tuple{\kappa_0^{|L_0|}, \Omega_{\leq}} \}$; \label{alg:nn2pwl-inner-xi1}
		\STATE $\Omega_{\lessgtr} := \Omega \cup \{ f_k^\Lambda\circ f(\mathbf{x}) \geq 0,~ f_k^\Lambda\circ f(\mathbf{x}) \leq 1 \}$; \label{alg:nn2pwl-inner-omega2}
		\STATE $\Xi_k := \Xi_k \cup \{ \tuple{f_k^\Lambda\circ f, \Omega_{\lessgtr}} \}$; \label{alg:nn2pwl-inner-xi2}
		\STATE $\Omega_{\geq} := \Omega \cup \{ f_k^\Lambda\circ f(\mathbf{x}) \geq 1 \}$; \label{alg:nn2pwl-inner-omega3}
		\STATE $\Xi_k := \Xi_k \cup \{ \tuple{\kappa_1^{|L_0|}, \Omega_{\geq}} \}$; \label{alg:nn2pwl-inner-xi3}
		\ENDFOR
		\ENDIF
	\end{algorithmic}
\end{algorithm}

\begin{lemma}\label{thm:nn2pwl-omega}
	Let a ReLU--TId neural network $N$, for which $\mathcal{L}_N = \{ L_0,\ldots,L_\Lambda \}$, be given as input to Algorithm~\ref{alg:nn2pwl}. Then, Algorithm \ref{alg:nn2pwl} terminates and outputs a tuple $\Xi_N=\tuple{\Xi_1,\ldots,\Xi_{|L_\Lambda|}}$ where, for $k\in\{1,\ldots,|L_\Lambda|\}$, we have:
	\begin{itemize}
		\item $\bigcup_{\Omega \text{, for } \tuple{p,\Omega}\in\Xi_k} \Omega = [0,1]^{|L_0|}$;
		\item $\inte{\Omega'}\cap\inte{\Omega''}=\emptyset$, for distinct $\tuple{p',\Omega'},\tuple{p'',\Omega''}\in\Xi_k$.
	\end{itemize}
\end{lemma}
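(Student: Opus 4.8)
The plan is to prove the three assertions---termination, covering, and interior-disjointness---separately, reducing the latter two to a single induction on the recursive routine \textsc{nn2pwl-r}. Termination is immediate: the recursive calls in Algorithm~\ref{alg:nn2pwl-inner} always advance the distinguished layer from $L_i$ to $L_{i+1}$ (line~\ref{alg:nn2pwl-inner-recursive-call1}) and stop once $L_i=L_\Lambda$, so the recursion tree has depth exactly $\Lambda$; each internal call branches finitely (over $\bowtie\in\{\leq,\geq\}^{|L_i|}$) and performs finitely many set-constructions, so only finitely many calls occur and each does finite work. Note that for this lemma the \emph{semantic} correctness of the tuple $f$ is irrelevant: every argument below uses only that each $f_j^i\circ f$ is an affine, hence real-valued, function.

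For covering and disjointness I would prove, by induction on the number $\Lambda-i$ of layers still to be processed, the following claim about any call $\textsc{nn2pwl-r}(\Xi_N\parallel L_i,\Omega,f)$: upon its return, the regions of the pairs that this call (together with all its nested subcalls) inserts into a given $\Xi_k$ have union equal to $\Omega$ and pairwise disjoint interiors. Applying this to the top-level call $\textsc{nn2pwl-r}(\Xi_N\parallel L_1,\Omega_{[0,1]},\pi)$ with $\Omega_{[0,1]}=[0,1]^{|L_0|}$ then yields the lemma. In the base case $L_i=L_\Lambda$, the routine inserts into $\Xi_k$ the three pairs with regions $\Omega_{\leq}$, $\Omega_{\lessgtr}$, $\Omega_{\geq}$ obtained from $\Omega$ by the constraints $f_k^\Lambda\circ f\leq 0$, $0\leq f_k^\Lambda\circ f\leq 1$ and $f_k^\Lambda\circ f\geq 1$. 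Their union is $\Omega$ because every $\mathbf{x}\in\Omega$ makes $f_k^\Lambda\circ f(\mathbf{x})$ fall into at least one of the cases $\leq 0$, $[0,1]$, $\geq 1$; disjointness of their interiors I obtain from the key geometric fact stated below (applied to $f_k^\Lambda\circ f$ and to $f_k^\Lambda\circ f-1$).

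For the inductive step $L_i\neq L_\Lambda$, the branches produce the regions $\Omega^i_{\bowtie}=\Omega\cup\{\,f_j^i\circ f\bowtie_j 0\mid j\}$. Covering at this level, $\bigcup_{\bowtie}\Omega^i_{\bowtie}=\Omega$, again holds because each $f_j^i\circ f(\mathbf{x})$ is $\leq 0$ or $\geq 0$, so every $\mathbf{x}\in\Omega$ lies in some $\Omega^i_{\bowtie}$; combined with the induction hypothesis (each subcall fills its $\Omega^i_{\bowtie}$), the regions inserted below $\Omega$ cover $\Omega$. For disjointness I would use two observations: first, the interior operator is monotone and distributes over finite intersections, so $\inte{S}\subseteq\inte{(\Omega^i_{\bowtie})}$ for any region $S\subseteq\Omega^i_{\bowtie}$ produced by a subcall; second, the \textbf{key geometric fact} that for a \emph{non-constant} affine $g$ one has $\inte{(\{g\leq 0\})}\cap\inte{(\{g\geq 0\})}=\emptyset$, since a common interior point would force $g\equiv 0$ on a relatively open, hence full-dimensional, subset of $[0,1]^{|L_0|}$, which is impossible. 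Hence if $\bowtie\neq\bowtie'$ differ in a coordinate $j$ where $f_j^i\circ f$ is non-constant, then $\inte{(\Omega^i_{\bowtie})}\cap\inte{(\Omega^i_{\bowtie'})}=\emptyset$, and by monotonicity the interiors of all regions produced under $\bowtie$ are disjoint from those produced under $\bowtie'$; within a single branch, disjointness is exactly the induction hypothesis.

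The step I expect to be the main obstacle is the bookkeeping around \emph{degenerate configurations}, where some composed function $f_j^i\circ f$ (or $f_k^\Lambda\circ f$) is constant. In that case flipping the corresponding symbol leaves both the polyhedron $\Omega^i_{\bowtie}$ and the updated tuple $f^i_{\bowtie}$ unchanged (the $j$-th component becomes the zero function either way), so the two branches generate \emph{identical} pairs; since each $\Xi_k$ is a set, these duplicates are absorbed and never occur as two \emph{distinct} pairs, so the interior-disjointness clause---which quantifies only over distinct pairs---is respected. I would verify the analogous coincidences at the output layer (a constant $f_k^\Lambda\circ f$ collapses one of $\Omega_{\leq},\Omega_{\lessgtr},\Omega_{\geq}$ to $\emptyset$ and makes the surviving linear piece coincide with $\kappa_0^{|L_0|}$ or $\kappa_1^{|L_0|}$), confirming that covering and disjointness survive these edge cases and thereby closing the induction.
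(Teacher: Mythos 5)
Your overall strategy is in substance the same as the paper's own proof --- termination from finite branching plus a strictly increasing layer index, covering because at each step of the construction every point satisfies at least one of the available inequality choices, and interior-disjointness because two regions first diverge at some step where they acquire opposing half-space constraints. You go further than the paper by organizing this as an explicit induction on the recursion and by isolating the geometric fact that does the work: for a \emph{non-constant} affine $g$, $\inte{\{g\leq 0\}}\cap\inte{\{g\geq 0\}}=\emptyset$, combined with monotonicity of the interior operator. The paper compresses all of this into the single sentence ``such differing inequalities guarantee that $\inte{\Omega'}\cap\inte{\Omega''}=\emptyset$,'' so on the main line your write-up is sound and more careful than the published argument.

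The one genuine error is in your treatment of the degenerate case, which you yourself flag as the main obstacle. If $f_j^i\circ f$ is constant with value $c\neq 0$, it is false that flipping $\bowtie_j$ ``leaves both the polyhedron $\Omega^i_{\bowtie}$ and the updated tuple $f^i_{\bowtie}$ unchanged'': the branch whose added constraint is violated ($c\leq 0$ when $c>0$, or $c\geq 0$ when $c<0$) produces the \emph{empty} polyhedron while the other survives, and the surviving branch's $j$-th activation component is the constant $c$, not the zero function, when $\bowtie_j$ is $\geq$. So the two branches do not generate identical pairs, and your set-absorption argument does not apply to them. The induction still closes, but by the observation you only make for the output layer: every region descended from an empty polyhedron is empty, hence has empty interior, and is therefore harmless both to covering (the surviving branch already covers $\Omega$) and to disjointness (trivial against an empty interior). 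Your identical-pairs argument is valid precisely when $c=0$, where both added inequalities normalize to $0\geq 0$ and both activation components are the zero function. With that case split repaired, your proof is complete --- and, for what it is worth, it then patches a gap that the paper's own terse argument glosses over as well, since ``differing inequalities'' involving a constant function do not by themselves force disjoint interiors.
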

\begin{proof}
	Algorithm \ref{alg:nn2pwl} always terminates since all of its loops, which are originated from Algorithm \ref{alg:nn2pwl-inner} calls, range over some finite set and all recursive calls in Algorithm \ref{alg:nn2pwl-inner} increments the index of the input layer $L_i$, which will eventually reach the last layer $L_\Lambda$ and break the recursion by falsifying the conditional statement in line \ref{alg:nn2pwl-inner-firstIf} of Algorithm \ref{alg:nn2pwl-inner}. Let $\Xi_k$ be an entry in $\Xi_N$; all inequalities added to regions in $\Xi_k$ determine half-spaces in $[0,1]^{|L_0|}$. Indeed, this is the case in the first step of the construction of regions (Algorithm \ref{alg:nn2pwl}, line \ref{alg:nn2pwl-generic-region}). This is also the case for the remaining half-spaces, whose corresponding inequalities are recursively added in lines \ref{alg:nn2pwl-inner-halfspaces1}, \ref{alg:nn2pwl-inner-omega1}, \ref{alg:nn2pwl-inner-omega2} and \ref{alg:nn2pwl-inner-omega3} of Algorithm \ref{alg:nn2pwl-inner} and depend on its input tuple of linear functions $f$, which, in turn, are inductively defined over $[0,1]^{|L_0|}$: first by $\pi$ (Algorithm \ref{alg:nn2pwl}, line \ref{alg:nn2pwl-projection-tuple}) in the first call of \textsc{nn2pwl-r} (Algorithm \ref{alg:nn2pwl}, line \ref{alg:nn2pwl-recursive-call}); then, by $f^i_{\bowtie}$, for $i\in\{2,\ldots,\Lambda\}$ (Algorithm \ref{alg:nn2pwl-inner}, line \ref{alg:nn2pwl-inner-functions1}) in the following $\Lambda-1$ calls of \textsc{nn2pwl-r} (Algorithm \ref{alg:nn2pwl-inner}, line \ref{alg:nn2pwl-inner-recursive-call1}). Let $\mathbf{x}\in [0,1]^{|L_0|}$; note that among the possibilities for inequalities to be added in each step of the construction of regions, there is certainly one that is satisfied by $\mathbf{x}$. Thus, with the suitable configuration of symbols, there is a region $\Omega$ of $\Xi_k$ built such that $\mathbf{x}\in\Omega$. Now, in the construction of two regions $\Omega'$ and $\Omega''$ of $\Xi_k$, with $\Omega'\neq\Omega''$, there is some step where the added inequalities differ for $\Omega'$ and $\Omega''$ for the first time. Such differing inequalities guarantee that $\inte{\Omega'}\cap\inte{\Omega''}=\emptyset$, whether they appear in an intermediate step or the final one.
\end{proof}

\begin{lemma}\label{thm:nn2pwl-eval}
	Let a ReLU--TId neural network $N$, for which $\mathcal{L}_N = \{ L_0,\ldots,L_\Lambda \}$, be given as input to Algorithm~\ref{alg:nn2pwl} and let $\Xi_k$ be an entry in the outputted tuple $\Xi_N$ for which $\tuple{p,\Omega}\in\Xi_k$. If $\mathbf{x}\in\Omega$, then $N(\mathbf{x})_k=p(\mathbf{x})$.
\end{lemma}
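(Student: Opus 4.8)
The plan is to prove the lemma by establishing, via induction on the layer index, an invariant that pins down the meaning of the function tuple $f$ carried by each recursive call, and then to read off the conclusion from the three cases handled in the final-layer branch of Algorithm~\ref{alg:nn2pwl-inner}. Termination and the mere existence of the output are already available, so I would concentrate on the evaluation claim.

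First I would formulate the invariant: for every call $\textsc{nn2pwl-r}(\Xi_N \parallel L_i, \Omega, f)$ that occurs during execution and every $\mathbf{x} \in \Omega$, the tuple $f$ satisfies $f(\mathbf{x}) = L_{i-1} \circ \cdots \circ L_1(\mathbf{x})$, where for $i = 1$ the empty composition is read as the identity. The base case is the initial call from Algorithm~\ref{alg:nn2pwl} (line~\ref{alg:nn2pwl-recursive-call}), where $\Omega = \Omega_{[0,1]}$ and $f = \pi = \tuple{\pi_1^{|L_0|}, \ldots, \pi_{|L_0|}^{|L_0|}}$; since each $\pi_n^{|L_0|}$ is a coordinate projection, $\pi(\mathbf{x}) = \mathbf{x}$ for all $\mathbf{x}$, matching the identity.

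For the inductive step over a hidden layer $L_i$ (with $i < \Lambda$), I would fix the branch selected by some tuple of symbols $\bowtie \in \{\leq, \geq\}^{|L_i|}$ and the resulting recursive call with arguments $(L_{i+1}, \Omega^i_\bowtie, f^i_\bowtie)$. Taking any $\mathbf{x} \in \Omega^i_\bowtie \subseteq \Omega$, the induction hypothesis gives $f(\mathbf{x}) = L_{i-1} \circ \cdots \circ L_1(\mathbf{x})$, so $f(\mathbf{x})$ is exactly the input fed to layer $L_i$. The key is to check coordinatewise that the definition of $f^i_\bowtie$ in line~\ref{alg:nn2pwl-inner-functions1} reproduces $L_i$ restricted to $\Omega^i_\bowtie$: when $\bowtie_j = \geq$, the half-space added in line~\ref{alg:nn2pwl-inner-halfspaces1} forces $f_j^i \circ f(\mathbf{x}) \geq 0$, so $\mathrm{ReLU}(f_j^i \circ f(\mathbf{x})) = f_j^i \circ f(\mathbf{x})$, which equals $\chi_{|L_0|}(\geq) \cdot (f_j^i \circ f) = \kappa_1^{|L_0|} \cdot (f_j^i \circ f)$; when $\bowtie_j = \leq$, the half-space forces $f_j^i \circ f(\mathbf{x}) \leq 0$, so $\mathrm{ReLU}(f_j^i \circ f(\mathbf{x})) = 0 = \kappa_0^{|L_0|} \cdot (f_j^i \circ f) = \chi_{|L_0|}(\leq) \cdot (f_j^i \circ f)$. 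Hence $f^i_\bowtie(\mathbf{x}) = L_i(f(\mathbf{x})) = L_i \circ \cdots \circ L_1(\mathbf{x})$, which is precisely the invariant for the next call. This is the step I expect to require the most care, since it is exactly where the piecewise definition of $\mathrm{ReLU}$ in \eqref{eq:activation-functions} must be matched against the algebraic definition of $f^i_\bowtie$ through $\chi_{|L_0|}$ and the scalar multiplications.

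Finally, I would note that a pair $\tuple{p, \Omega}$ can enter $\Xi_k$ only through the final-layer branch $L_i = L_\Lambda$, so it suffices to treat that case. There the invariant gives $f(\mathbf{x}) = L_{\Lambda-1} \circ \cdots \circ L_1(\mathbf{x})$ for $\mathbf{x} \in \Omega$, whence the $k$-th output is $N(\mathbf{x})_k = \mathrm{TId}(f_k^\Lambda \circ f(\mathbf{x}))$. Matching the three added pairs against the piecewise definition of $\mathrm{TId}$ then closes the argument: over $\Omega_\leq$ the constraint $f_k^\Lambda \circ f(\mathbf{x}) \leq 0$ gives $\mathrm{TId} = 0 = \kappa_0^{|L_0|}(\mathbf{x}) = p(\mathbf{x})$; over $\Omega_\lessgtr$ the constraints $0 \leq f_k^\Lambda \circ f(\mathbf{x}) \leq 1$ give $\mathrm{TId} = f_k^\Lambda \circ f(\mathbf{x}) = p(\mathbf{x})$; and over $\Omega_\geq$ the constraint $f_k^\Lambda \circ f(\mathbf{x}) \geq 1$ gives $\mathrm{TId} = 1 = \kappa_1^{|L_0|}(\mathbf{x}) = p(\mathbf{x})$. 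In each case $p(\mathbf{x}) = N(\mathbf{x})_k$, as required.
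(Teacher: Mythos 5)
Your proposal is correct and follows essentially the same route as the paper's proof: an induction over the layers showing that the function tuple carried by each recursive call agrees with $L_{i}\circ\cdots\circ L_1$ on the associated region (using the half-space constraints to match $\chi_{|L_0|}$ against the piecewise definition of $\mathrm{ReLU}$), followed by a case analysis on $\Omega_\leq$, $\Omega_\lessgtr$, $\Omega_\geq$ against the definition of $\mathrm{TId}$. The only differences are presentational — you phrase the induction as an invariant over recursive calls rather than over the nested intersections $\Omega_{[0,1]}\supseteq\Omega^1_{\bowtie}\supseteq\cdots\supseteq\Omega_{\bowtie}$, and you spell out all three output-layer cases where the paper treats only $\Omega_\lessgtr$ and declares the others similar.
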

\begin{proof}
	Let $\mathbf{x}\in\Omega$. Note that $\Omega = \Omega_{[0,1]} \cap \Omega^1_{\bowtie} \cap \cdots \cap \Omega^{\Lambda-1}_{\bowtie} \cap \Omega_{\bowtie}$ and $\Omega_{[0,1]} \supseteq \Omega^1_{\bowtie} \supseteq \cdots \supseteq \Omega^{\Lambda-1}_{\bowtie} \supseteq \Omega_{\bowtie}$, where $\Omega^i_{\bowtie}$ is such that $\bowtie\in\{\leq,\geq\}^{|L_i|}$ and $\Omega_{\bowtie}\in\{\Omega_\leq,\Omega_\lessgtr,\Omega_\geq\}$. Then, as $\mathbf{x}\in\Omega_{[0,1]}$, $\mathbf{x}\in [0,1]^{|L_0|}$. Also, as $\mathbf{x}\in\Omega^i_{\bowtie}$, for $i\in\{1,\ldots,\Lambda-1\}$, the tuples of functions $f^i_{\bowtie}$ defined in line \ref{alg:nn2pwl-inner-functions1} of Algorithm~\ref{alg:nn2pwl-inner}, given as arguments in the recursive call of \textsc{nn2pwl-r}, are such that $f^i_{\bowtie}(\mathbf{x}) = L_i\circ\cdots\circ L_1(\mathbf{x})$. Indeed, as $\mathbf{x}\in\Omega^1_{\bowtie}$, $\mathbf{x}$ satisfies the inequalities
	\[ f_1^1(\mathbf{x})=f_1^1\circ\pi(\mathbf{x}) \bowtie_1 0, ~~~\ldots,~~~ f_{|L_1|}^1(\mathbf{x})=f_{|L_1|}^1\circ\pi(\mathbf{x}) \bowtie_{|L_1|} 0.  \]
	Then, we have that
	\[ f^1_{\bowtie}(\mathbf{x}) = \tuple{\chi(\bowtie_1) \cdot f_1^1\circ\pi(\mathbf{x}), \ldots, \chi(\bowtie_{|L_1|}) \cdot f_{|L_1|}^1\circ\pi(\mathbf{x})} = \tuple{\mathrm{ReLU}(f_1^1(\mathbf{x})), \ldots, \mathrm{ReLU}(f_{|L_1|}^1(\mathbf{x}))} = L_1(\mathbf{x}). \]
	Now, let us assume that $f^i_{\bowtie}(\mathbf{x})=L_i\circ\cdots\circ L_1(\mathbf{x})$, for $\mathbf{x}\in\Omega$. As, in particular, $\mathbf{x}\in\Omega^{i+1}_{\bowtie}$, $\mathbf{x}$ satisfies the inequalities
	\[ f^{i+1}_1\circ f^i_{\bowtie}(\mathbf{x})=f^{i+1}_1\circ L_i\circ\cdots\circ L_1(\mathbf{x}) \bowtie_1 0, ~~~\ldots,~~~ f^{i+1}_{|L_{i+1}|}\circ f^i_{\bowtie}(\mathbf{x})=f^{i+1}_{|L_{i+1}|}\circ L_i\circ\cdots\circ L_1(\mathbf{x}) \bowtie_{|L_{i+1}|} 0,  \]
	it follows that
	\begin{align*}
		f^{i+1}_{\bowtie}(\mathbf{x}) &= \tuple{\chi(\bowtie_1) \cdot f_1^{i+1}\circ L_i\circ\cdots\circ L_1(\mathbf{x}), ~~~\ldots,~~~ \chi(\bowtie_{|L_1|}) \cdot f_{|L_1|}^{i+1}\circ L_i\circ\cdots\circ L_1(\mathbf{x})} \\
		&= \tuple{\mathrm{ReLU}(f_1^{i+1}\circ L_i\circ\cdots\circ L_1(\mathbf{x})), ~~~\ldots,~~~ \mathrm{ReLU}(f_{|L_1|}^{i+1}\circ L_i\circ\cdots\circ L_1(\mathbf{x}))} \\
		&= L_{i+1}\circ\cdots\circ L_1(\mathbf{x}).
	\end{align*}
	Finally, in case $\Omega_{\bowtie}=\Omega_\lessgtr$ (Algorithm \ref{alg:nn2pwl-inner}, line \ref{alg:nn2pwl-inner-omega2}), as, in particular, $\mathbf{x}\in\Omega_\lessgtr$, we have that
	\[ 0 \leq f_k^\Lambda\circ L_{\Lambda-1}\circ\cdots\circ L_1(\mathbf{x}) \leq 1. \]
	Therefore,
	\[ N(\mathbf{x})_k = \mathrm{TId}( f_k^\Lambda\circ L_{\Lambda-1}\circ\cdots\circ L_1(\mathbf{x}) ) = f_k^\Lambda\circ L_{\Lambda-1}\circ\cdots\circ L_1(\mathbf{x}) = p(\mathbf{x}). \]
	The other cases where $\Omega$ is either $\Omega_{\leq}$ or $\Omega_{\geq}$ are similar.
\end{proof}

\begin{theorem}[Correctness] \label{thm:nn2pwl-correct}
	Let a ReLU--TId neural network $N$, for which $\mathcal{L}_N = \{ L_0,\ldots,L_\Lambda \}$, be given as input to Algorithm~\ref{alg:nn2pwl} and let $\Xi_N=\tuple{\Xi_1,\ldots,\Xi_n}$ be its output. Then, each entry $\Xi_k$ in $\Xi_N$ codifies a rational McNaughton function in the pre-closed regional format which is exactly the function computed by $N$ through the path to its $k$-th output node.
\end{theorem}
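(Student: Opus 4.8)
The plan is to treat Lemmas~\ref{thm:nn2pwl-omega} and~\ref{thm:nn2pwl-eval} as the two load-bearing facts and to reduce the theorem to a checklist verification of the definition of the pre-closed regional format. Lemma~\ref{thm:nn2pwl-omega} already delivers termination together with the two covering conditions, namely that for each $k$ the regions occurring in $\Xi_k$ union to $[0,1]^{|L_0|}$ and have pairwise disjoint interiors; Lemma~\ref{thm:nn2pwl-eval} already delivers that every linear piece $p$ paired with a region $\Omega$ in $\Xi_k$ satisfies $N(\mathbf{x})_k = p(\mathbf{x})$ for all $\mathbf{x}\in\Omega$. So the remaining obligations are purely structural: that the regions are genuinely polyhedra, that the pieces are affine with rational coefficients, and that the map $\mathbf{x}\mapsto N(\mathbf{x})_k$ is itself a rational McNaughton function for which the collected pieces serve as witnesses.

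First I would argue that $N(\cdot)_k\colon[0,1]^{|L_0|}\to[0,1]$ is a rational McNaughton function. Continuity is immediate because $N(\cdot)_k$ is a finite composition of continuous maps: the affine functions $f_j^i$ and the activation functions $\mathrm{ReLU}$ and $\mathrm{TId}$ are all continuous, and the codomain lands in $[0,1]$ precisely because the output node applies $\mathrm{TId}$. For the piecewise-linear-with-rational-coefficients condition I would take the finite family of pieces $p$ appearing in $\Xi_k$ as the candidate linear pieces $p_1,\ldots,p_m$. Each such $p$ is either $\kappa_0^{|L_0|}$, $\kappa_1^{|L_0|}$, or of the form $f_k^\Lambda\circ f$ with $f$ the tuple built recursively in line~\ref{alg:nn2pwl-inner-functions1} of Algorithm~\ref{alg:nn2pwl-inner}; since those functions are obtained from the rational affine maps $f_j^i$ by composition and by multiplication with the constants selected through $\chi_{|L_0|}$, an induction on the layer index, mirroring the one already carried out in the proof of Lemma~\ref{thm:nn2pwl-eval}, shows each $p$ is affine with rational coefficients. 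Finally, by Lemma~\ref{thm:nn2pwl-omega} every $\mathbf{x}$ lies in some region $\Omega$ of $\Xi_k$, and Lemma~\ref{thm:nn2pwl-eval} then gives $N(\mathbf{x})_k=p(\mathbf{x})$ for the associated piece, so at each point the function agrees with one of the $p_i$; this is exactly the defining property of a rational McNaughton function.

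Next I would assemble the pre-closed regional format for this function. Each region $\Omega$ is, by construction, a finite intersection of the half-spaces added in line~\ref{alg:nn2pwl-generic-region} of Algorithm~\ref{alg:nn2pwl} and in lines~\ref{alg:nn2pwl-inner-halfspaces1}, \ref{alg:nn2pwl-inner-omega1}, \ref{alg:nn2pwl-inner-omega2} and~\ref{alg:nn2pwl-inner-omega3} of Algorithm~\ref{alg:nn2pwl-inner}, hence a polyhedron in $[0,1]^{|L_0|}$. The two set-theoretic properties come verbatim from Lemma~\ref{thm:nn2pwl-omega}, and Lemma~\ref{thm:nn2pwl-eval} shows each piece is identical to $N(\cdot)_k$ over its region. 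Consistency on shared boundaries requires no extra work: if $\mathbf{x}$ belongs to two regions $\Omega'$ and $\Omega''$ with pieces $p'$ and $p''$, then $p'(\mathbf{x})=N(\mathbf{x})_k=p''(\mathbf{x})$ by Lemma~\ref{thm:nn2pwl-eval}, so the pieces agree there automatically and the encoding is well defined. This verifies every clause of the closed regional format except the lattice property, which the algorithm neither enforces nor is claimed to produce; by definition this is exactly the pre-closed regional format, and since the encoded function agrees with $N(\cdot)_k$ pointwise it is exactly the function computed through the $k$-th output node.

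The hard part, to the extent there is one, is not a single difficult step but the bookkeeping inside the second paragraph: verifying that the recursively constructed pieces $f_k^\Lambda\circ f$ stay affine with rational coefficients through all $\Lambda$ layers. This is where the multiplications by $\chi$-selected constants and the layer-by-layer compositions could in principle destroy linearity, and it is handled by the same inductive scheme used for Lemma~\ref{thm:nn2pwl-eval}; everything else is a direct transcription of the two lemmas against the definition of the pre-closed regional format.
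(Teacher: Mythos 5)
Your proposal is correct and follows essentially the same route as the paper's own proof: both reduce the theorem to Lemma~\ref{thm:nn2pwl-omega} (covering and disjoint interiors), Lemma~\ref{thm:nn2pwl-eval} (pointwise agreement of pieces with $N(\cdot)_k$), and continuity of $N(\cdot)_k$ as a composition of continuous maps. Your version simply spells out details the paper leaves implicit under ``by construction'' (polyhedrality of regions, rationality and affinity of the recursively built pieces, boundary consistency), which is fine but not a different argument.
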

\begin{proof}
	By construction and Lemma \ref{thm:nn2pwl-omega}, regions in $\Xi_k$ comply to the properties of pre-closed regional format. Lemma \ref{thm:nn2pwl-eval} establishes that evaluation via $\Xi_k$ is the same as via the $k$-th output node. Since the function computed via the $k$-th output node is a composition of continuous functions---both linear functions associated to nodes of $N$ and activation functions---, it is a continuous function. Therefore, entries in the tuple $\Xi_N$ codify continuous functions which are rational McNaughton functions.
\end{proof}

\begin{corollary}
	ReLU--TId neural networks are $\nu$-rational McNaughton neural networks, where $\nu=|L_\Lambda|$.
\end{corollary}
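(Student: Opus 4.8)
The plan is to read the corollary off directly from the Correctness Theorem, since that theorem already carries all the substantive content. Recall that a $\nu$-RMcN\textsuperscript{3} was defined as a neural network whose $\nu=|L_\Lambda|$ output nodes \emph{exactly compute} rational McNaughton functions of its input nodes. Thus, to establish the corollary it suffices to exhibit, for an arbitrary ReLU--TId neural network $N$, that each of its $|L_\Lambda|$ output nodes computes a rational McNaughton function, and that there are precisely $\nu=|L_\Lambda|$ such nodes.

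First I would fix an arbitrary ReLU--TId neural network $N$ with $\mathcal{L}_N=\{L_0,\ldots,L_\Lambda\}$ and feed it to Algorithm~\ref{alg:nn2pwl}, which is legitimate by the input specification of that algorithm. Next I would apply Theorem~\ref{thm:nn2pwl-correct} to the resulting output $\Xi_N=\tuple{\Xi_1,\ldots,\Xi_{|L_\Lambda|}}$: the theorem guarantees that, for each $k\in\{1,\ldots,|L_\Lambda|\}$, the entry $\Xi_k$ codifies a rational McNaughton function which is \emph{exactly} the function computed by $N$ through the path to its $k$-th output node. Since $N$ is a ReLU--TId network, its admissible inputs lie in $[0,1]^{|L_0|}$ and its output layer applies $\mathrm{TId}$, so each output coordinate lands in $[0,1]$; hence the function computed by the $k$-th output node has exactly the signature $[0,1]^{|L_0|}\to[0,1]$ demanded by the definition of a rational McNaughton function. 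Matching these two facts against the definition of a $\nu$-RMcN\textsuperscript{3} with $\nu=|L_\Lambda|$ then closes the argument.

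I do not expect any serious obstacle here: the corollary is essentially a restatement of Theorem~\ref{thm:nn2pwl-correct} in the vocabulary of $\nu$-RMcN\textsuperscript{3}s, and the only points requiring a moment's care are the bookkeeping that ``exactly compute'' in the definition and ``exactly the function computed by $N$ through the $k$-th output node'' in the theorem refer to the same object, together with the trivial check that domain and codomain match $[0,1]^{|L_0|}\to[0,1]$. A mild caveat worth flagging is that it is the algorithm---and therefore the theorem---that actually certifies the existence of the required rational piecewise-linear representation; the corollary itself carries no independent computational content beyond this transcription.
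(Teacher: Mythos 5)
Your proof is correct and follows exactly the route the paper intends: the corollary is an immediate consequence of Theorem~\ref{thm:nn2pwl-correct}, read against the definition of a $\nu$-RMcN\textsuperscript{3} (the paper states it without proof for precisely this reason). Your extra check that the computed functions have signature $[0,1]^{|L_0|}\to[0,1]$, forced by the restricted inputs and the $\mathrm{TId}$ output activation, is a sensible piece of bookkeeping the paper leaves implicit.
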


\subsection{Decreasing the execution time of the base algorithm}
The base algorithm just introduced has the downside to be exponential in the number of nodes of a given neural network. For a neural network $N$ with $\mathcal{L}_N = \{ L_0,\ldots,L_\Lambda \}$, we have seen that it computes $3|L_\Lambda| \times 2^{|L_1|+\cdots+|L_{\Lambda-1}|}$ regions. However, many of such regions may be the empty set, which makes the outputs of the base algorithm examples of degenerate codification in pre-closed regional format.

\begin{example}
	In a configuration of symbols where we associate $\geq$ to $n_1^1$ and $\leq$ to $n_2^1$ in the neural network $E$ in Example \ref{ex:example}, the corresponding inequalities $\frac{4}{3}x_1-x-2\geq 0$ and $x_1-x_2+\frac{1}{2}\leq 0$ together, related to the first layer $L_1$, determine the empty set.
\end{example}

In the step-by-step construction of a region $\Omega=\emptyset$ by the base algorithm, there is some step from the second when equations are added to the current polyhedron turning it into the empty set. In view of that, the first addition proposed for decreasing the execution time of the base algorithm consists in:
\begin{itemize}
	\item Conditioning the call of \textsc{nn2pwl-r} in line \ref{alg:nn2pwl-inner-recursive-call1} of Algorithm \ref{alg:nn2pwl-inner} by placing it within the scope of an if-statement that verifies whether $\Omega^i_{\bowtie} \neq \emptyset$;
	\item Conditioning the addition of new pairs $\tuple{p,\Omega_{\bowtie}}$, for all ${\bowtie}\in\{\leq,\lessgtr,\geq\}$, to tuple $\Xi_N$ in lines \ref{alg:nn2pwl-inner-xi1}, \ref{alg:nn2pwl-inner-xi2} and \ref{alg:nn2pwl-inner-xi3} of Algorithm \ref{alg:nn2pwl-inner} by placing these commands within the scope of if-statements that verify whether $\Omega_\leq\neq\emptyset$, $\Omega_\lessgtr\neq\emptyset$ and $\Omega_\geq\neq\emptyset$.
\end{itemize}

Verifying whether $\Omega_{\bowtie} \neq \emptyset$ might significantly decrease the running time of the translation algorithm in practice. Indeed, each true statement $\Omega_{\bowtie} \neq \emptyset$ occurring in the the $i$-th step of the construction of regions, for $i\in\{1,\ldots,\Lambda-1\}$, avoids a call of \textsc{nn2pwl-r} that, in the pure base algorithm, would yield the computation of $3|L_\Lambda| \times 2^{|L_{i+1}|+\cdots+|L_{\Lambda-1}|}$ pairs $\tuple{p,\Omega}$. On the other hand, verifying whether $\Omega_\leq\neq\emptyset$, $\Omega_\lessgtr\neq\emptyset$ or $\Omega_\geq\neq\emptyset$ in the last step of the construction of regions only prevents the algorithm to add pairs with empty regions to the regional format codification, which, nevertheless, makes the final representative tuple $\Xi_N$ smaller.

A possible way to verify whether a polyhedron $\Omega$ given as in \eqref{eq:region} is nonempty is by applying the known polynomial techniques used to verify whether a linear optimization program constrained by $\Omega$ is feasible \cite{BT1997}.

For another method for easing the execution time of the base algorithm, observe that, for a layer $L_i$, for $i\in\{1,\ldots,\Lambda-1\}$, each of the hyperplanes $f_j^i(\mathbf{x})=0$ related to nodes $n_j^i$ of $L_i$, for $j\in\{1,\ldots,|L_i|\}$, divides the euclidean space $\mathbb{R}^{|L_0|}$ in two half-spaces determined by the inequalities $f_j^i(\mathbf{x})\geq 0$ and $f_j^i(\mathbf{x})\leq 0$. Each of these inequalities is added to half of the $2^{|L_i|}$ polyhedra generated in the first for-loop of Algorithm \ref{alg:nn2pwl-inner} (lines \ref{alg:nn2pwl-inner-For} to \ref{alg:nn2pwl-inner-For-end}); these are the polyhedra generated in the $i$-th step of the construction of regions. Now, note that if the hyperplane $f_j^i(\mathbf{x})=0$ does not intercept the interior of the unit cube $[0,1]^{|L_0|}$, half of the new generated polyhedra are certainly empty. For instance, the hyperplane $x_1 + x_2 - 2 = 0$ does not intercept $[0,1]^{|L_0|}$. Thus, although the half-space given by $x_1 + x_2 - 2 \leq 0$ contains the entire unit cube $[0,1]^{|L_0|}$, the half-space given by $x_1 + x_2 - 2 \geq 0$ does not intersect it; so, if $x_1 + x_2 - 2 \geq 0$ is added to a polyhedron in some step of the construction of regions by the base algorithm, the regions generated from such polyhedron will be the empty set.

Thus, for the step related to layer $L_i$, for $i\in\{1,\ldots,\Lambda\}$, in the construction of regions, the proposed method consists in building a set $\mathbf{I}\subseteq\{\leq,\geq\}^{|L_i|}$ to be iterated instead of the set $\{\leq,\geq\}^{|L_i|}$ in the for-loop beginning in line \ref{alg:nn2pwl-inner-For} of Algorithm \ref{alg:nn2pwl-inner}, so avoiding the generation of empty polyhedra. For that, we compute $\mathbf{I}=\mathbf{I}_1\times\cdots\times\mathbf{I}_{|L_i|}$ where, for $j\in\{1,\ldots,|L_i|\}$,
\begin{align*}
	\mathbf{I}_j = \left\{
	\begin{array}{ll}
		\{\geq,\leq\}, & \text{if $f_j^i(\mathbf{x})=0$ intercepts the interior of $[0,1]^{|L_0|}$} \\
		\{\leq\}, & \text{if $f_j^i(\mathbf{x})\leq 0$ contains the entire $[0,1]^{|L_0|}$} \\
		\{\geq\}, & \text{if $f_j^i(\mathbf{x})\geq 0$ contains the entire $[0,1]^{|L_0|}$}
	\end{array} \right.
\end{align*}
Determining which is the case for each $\mathbf{I}_j$ may be done by solving both of the following maximization and minimization linear programs, which are known to be solvable in polynomial time \cite{BT1997}:
\begin{align*}
	\begin{array}{lll}
		\max / \min	& f_j^i(\mathbf{x}) \\
		\mbox{subject to} & [0,1]^{|L_0|}
	\end{array}
\end{align*}
Let $M$ and $m$ respectively be the maximum and the minimum optimum values of the linear programs above. Then: if $M\geq 0$ and $m\leq 0$ or if $M\leq 0$ and $m\geq 0$, $f_j^i(\mathbf{x})=0$ intercepts $[0,1]^{|L_0|}$; if $M\geq 0$ and $m\geq 0$, $f_j^i(\mathbf{x})\geq 0$ contains $[0,1]^{|L_0|}$; and if $M\leq 0$ and $m\leq 0$, $f_j^i(\mathbf{x})\leq 0$ contains $[0,1]^{|L_0|}$. The overall execution time of the translation algorithm, even with an additional routine for building $\mathbf{I}$, might be significantly smaller than the time for the original base algorithm. In fact, let $J\subseteq\{1,\ldots,|L_i|\}$ be the set of indexes such that $\mathbf{I}_j\neq\{\leq,\geq\}$ if, and only if, $j\in J$; then, the for-loop beginning in line \ref{alg:nn2pwl-inner-For} of Algorithm \ref{alg:nn2pwl-inner} has $2^{|L_i|-|J|}$ iterations instead of $2^{|L_i|}$.

Combining both of the methods described in this section with the base translation algorithm makes it compute exactly the same pairs $\tuple{p,\Omega}$ that it would compute without such methods with the exception of the ones for which $\Omega=\emptyset$. Therefore, we are able to establish the following result.

\begin{theorem}
	Replacing the routine \textsc{nn2pwl-r} for a version of it that includes the methods proposed in this section maintains the correctness of Algorithm \ref{alg:nn2pwl} established in Theorem \ref{thm:nn2pwl-correct}.
\end{theorem}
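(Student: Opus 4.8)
The plan is to reduce the statement to the identity already highlighted just before the theorem: the instrumented routine outputs, for every entry $\Xi_k$, exactly the same pairs $\tuple{p,\Omega}$ as the base routine, except that it suppresses those for which $\Omega=\emptyset$. Granting this identity, correctness follows almost immediately from Lemmas~\ref{thm:nn2pwl-omega} and~\ref{thm:nn2pwl-eval}: an empty region contributes nothing to a union, so $\bigcup_{\tuple{p,\Omega}\in\Xi_k}\Omega$ is unchanged and still equals $[0,1]^{|L_0|}$; deleting members of a family cannot create new overlaps, so pairwise disjointness of interiors is inherited; and the evaluation identity $N(\mathbf{x})_k=p(\mathbf{x})$ holds for each surviving pair because those pairs are literally unchanged, and is vacuous for the discarded ones, there being no $\mathbf{x}\in\emptyset$. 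Since the function computed through the $k$-th output node is thereby represented by the same linear pieces over the same nonempty regions, it is still the continuous rational McNaughton function of Theorem~\ref{thm:nn2pwl-correct}.

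So the work lies entirely in establishing the identity, and I would carry it out in two steps, one per optimization. First, the emptiness guards: I would argue that the feasibility test used to decide $\Omega^i_{\bowtie}\neq\emptyset$, and likewise $\Omega_\leq,\Omega_\lessgtr,\Omega_\geq\neq\emptyset$, is sound and complete, being precisely a linear-program feasibility check for the polyhedron~\eqref{eq:region}. The crucial structural fact is the nesting $\Omega_{[0,1]}\supseteq\Omega^1_{\bowtie}\supseteq\cdots\supseteq\Omega^{\Lambda-1}_{\bowtie}\supseteq\Omega_{\bowtie}$ already exploited in the proof of Lemma~\ref{thm:nn2pwl-eval}: because regions only shrink along the recursion, once a polyhedron is empty every descendant region built from it is empty as well. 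Hence skipping the recursive call at an empty $\Omega^i_{\bowtie}$ prunes a subtree all of whose leaves would have carried $\Omega=\emptyset$, so no nonempty pair is lost and no spurious pair is produced.

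Second, the construction of $\mathbf{I}=\mathbf{I}_1\times\cdots\times\mathbf{I}_{|L_i|}$: I would show that the max/min linear programs correctly classify the position of each hyperplane $f_j^i(\mathbf{x})=0$ relative to $[0,1]^{|L_0|}$, and that the symbol choice $\bowtie_j$ omitted from $\mathbf{I}_j$ is exactly one whose associated half-space is disjoint from the interior of the cube. When that half-space misses the cube entirely, the resulting region is empty, matching exactly a pair the base algorithm would have tagged with $\Omega=\emptyset$. Iterating $\bowtie$ over $\mathbf{I}$ in place of $\{\leq,\geq\}^{|L_i|}$ therefore deletes only empty-region branches, which, combined with the first step, yields the identity.

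The main obstacle I anticipate is the degenerate boundary case in the $\mathbf{I}$-construction: when a program attains optimum value $0$, the hyperplane meets the cube only along a face, so the omitted branch may yield a \emph{nonempty} region with empty interior rather than a strictly empty one. I would absorb this by noting that every such boundary point, satisfying $f_j^i(\mathbf{x})=0$, also satisfies the retained inequality, since the cube lies on that side of the hyperplane; hence it lies in the closed region kept by $\mathbf{I}$. Consequently coverage $\bigcup\Omega=[0,1]^{|L_0|}$ is preserved, disjointness of interiors is untouched, and dropping these empty-interior regions leaves all three properties of the pre-closed regional format intact. With this caveat accounted for, the identity---and therefore the theorem---follows.
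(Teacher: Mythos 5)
Your proposal is correct and follows the same skeleton as the paper's justification, but it is considerably more rigorous than what the paper actually supplies: the paper gives this theorem no proof at all, resting entirely on the single preceding sentence asserting that the modified routine computes \emph{exactly} the same pairs $\tuple{p,\Omega}$ except those with $\Omega=\emptyset$. Your reduction (empty regions add nothing to the union, a subfamily inherits disjointness of interiors, and the evaluation identity of Lemma~\ref{thm:nn2pwl-eval} is vacuous on discarded pairs) is precisely the implicit content of that sentence, and your appeal to the nesting $\Omega_{[0,1]}\supseteq\Omega^1_{\bowtie}\supseteq\cdots\supseteq\Omega_{\bowtie}$ from the proof of Lemma~\ref{thm:nn2pwl-eval} is the right formalization of why the emptiness guards prune only empty-region subtrees. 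Where you genuinely go beyond the paper is the degenerate case in the $\mathbf{I}$-construction: when a hyperplane $f_j^i\circ f(\mathbf{x})=0$ supports the cube without meeting its interior (optimum value $0$, e.g.\ $f_1^1(\mathbf{x})=x_1$), the omitted branch and its descendants can carry \emph{nonempty} regions contained in a face of the cube, so the paper's ``exactly the same pairs except the empty ones'' claim is strictly false, and the theorem survives only through the argument you give---every cube point satisfies the single retained inequality, hence coverage persists, while disjointness and evaluation are inherited because every surviving pair is literally a pair of the base output. One small tightening: your phrase ``it lies in the closed region kept by $\mathbf{I}$'' should be made inductive across layers (at each step either both symbols are retained, or the unique retained half-space contains the whole cube, and guard-pruned branches never contain the point since they are empty), so that the point is guaranteed to reach some final surviving region; with that spelled out, your argument is a complete proof and in fact repairs a gap the paper leaves open.
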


\section{Experiments and Results}
\label{sec:experiments}
We perform experiments for measuring the \emph{complexity} of pre-closed regional format encodings of randomly generated ReLU--TId neural networks by counting the number of nonempty regions in them. All weights of the neural networks have the form $i+d$, where both $i$ and $d$ are uniformly generated from $\{-1,0,1\}$ and $[0,1)$, respectively.

For each encoding in pre-closed regional format, we also evaluate its \emph{degree of satisfiability} of the lattice property by counting the number of pairs of regions $\tuple{\Omega_i,\Omega_j}$ for which there is no linear piece $p_k$ such that $p_i$ is above $p_k$ over $\Omega_i$ and $p_k$ is above $p_j$ over $\Omega_j$, that is the number of pairs $\tuple{\Omega_i,\Omega_j}$ that falsifies lattice property. If the counting is $0$, such an encoding completely satisfies lattice property; the higher the count the further from satisfying lattice property the encoding is.

Implementations of \textsc{nn2pwl}, including the methods for decreasing its execution time, and a neural network generator were developed for the experiments; the source code is publicly available.\footnote{\url{http://github.com/spreto/reluka}}

In the first batch of experiments, for a fixed value $h$, ReLU--TId neural networks with $h$ input neurons, $h$ neurons in each hidden layer and one output neuron are generated. Such random generation is done in such a way that the neural networks are partitioned in $L$ classes, each containing $n$ neural networks with $l$ hidden layers, for $l\in\{1,\ldots,L\}$. We ran such experiment for two parameter setups: $h=4$, $L=6$, $n=50$ and $h=5$, $L=10$, $n=25$. Figure \ref{fig:experiments1} depicts the average number of regions extracted from the neural networks in each class of $l$ hidden layers.

\begin{figure}
	\centering
	\includegraphics[width=250pt]{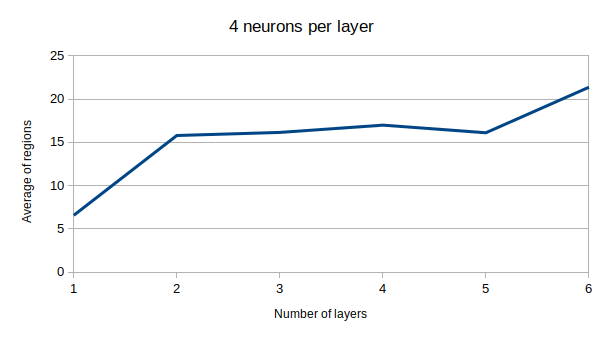}

	\includegraphics[width=250pt]{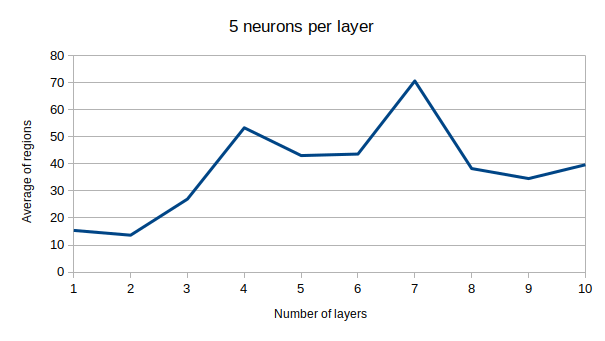}
	\caption{Experiments increasing the number of layers}
\label{fig:experiments1}
\end{figure}

In order to analyze whether the results of the previous experiments depend on the distribution of neurons per layer, in the second batch of experiments, ReLU--TId neural networks with a fixed number $l$ of hidden layers and one output neuron are generated. Now, the randomly generated neural networks are partitioned in $M$ classes, each containing $n$ neural networks with $m$ input neurons and $m$ neurons in each of their hidden layers, for $m\in\{1,\ldots,M\}$. We ran such experiment for two parameter setups: $l=4$, $M=6$, $n=50$ and $l=5$, $M=10$, $n=25$. Figure \ref{fig:experiments2} depicts the average number of regions extracted from the neural networks in each class of $m$ neurons in the input layer and per hidden layer.

\begin{figure}
	\centering
	\includegraphics[width=250pt]{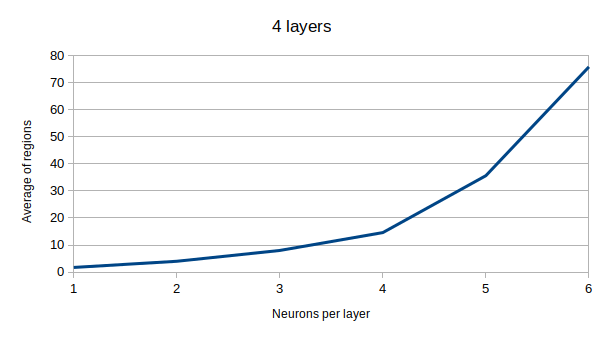}
	
	\includegraphics[width=250pt]{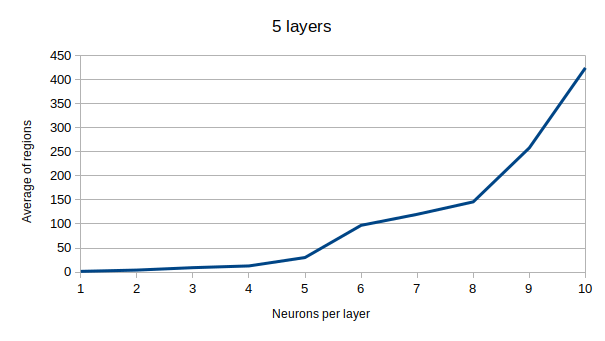}
	\caption{Experiments increasing the number of neurons per layer}
	\label{fig:experiments2}
\end{figure}

Note that the first experiment in both batches of experiments are related: for $l=m$, neural networks in a class with $l$ layers (first experiment, first batch) have the same number of neurons than the neural networks in a class with $m$ nodes per layer (first experiment, second batch). The same relation may be seen between the second experiments of each batch.

In all experiments, we may see that the average number of regions increases as long as the number of neurons increases. However, while such variation in the number of regions is smooth for varying the number of layers, a sharp variation may be perceived for varying the number of neurons per layer. A neural network with $5$ hidden layers and $10$ neurons in each of them ($50$ neurons in all hidden layers) achieved the maximum number of $1852$ regions among all neural networks generated. For comparison, among the neural networks with $50$ neurons distributed in $10$ hidden layers ($5$ neurons per hidden layer), the maximum number of regions achieved is $228$. And among all the neural networks with more than $5$ hidden layers, but only $5$ neurons in each of them, the maximum number of regions achieved is $446$ (in a neural network with $7$ hidden layers).

Regarding lattice property, among all $1100$ ReLU--TId neural networks that were generated in all experiments, only one failed to satisfy it. Such neural network has $5$ neurons in each of its $5$ hidden layers ($25$ neurons in all hidden layers) and its pre-closed regional format encoding has $91$ regions and fails to fulfill lattice property for $36$ pairs of regions $\tuple{\Omega_i,\Omega_j}$.

\section{Conclusions}
\label{sec:conclusions}
We have proposed an algorithm for translating ReLU--TId neural networks into the pre-closed regional format, which is a more interpretable representation than the traditional graph one. We also proposed methods for decreasing the computation time of the base algorithm and proved that ReLU--TId neural networks are $\nu$-rational McNaughton neural networks.

Empirically, we measured the complexity of pre-closed regional format encodings of randomly generated ReLU--TId neural networks by counting the number of nonempty regions in such encodings. We could verify a bigger increase in the number of regions in the encodings with wider, but fewer, layers than in the encodings with more, but thinner, layers. The fast increase of curves in Figure \ref{fig:experiments2}, related to the variation in the size of a fixed number of layers, points to the high complexity of regional representation. Therefore, the reported results foresee scaling issues in the regional representation of real-world neural networks, which often are larger than those generated in our investigation.

We have also investigated the degree of satisfiability of the lattice property by the neural networks generated in our experiments. The results empirically indicate that the outputs of \textsc{nn2pwl} lacking lattice property are a very rare event. Only one of the neural networks generated do not fulfill such a property.

For the future, approximate and less complex regional representations might be pursued. A possible path is to establish the reasonability of allowing encodings not satisfying lattice property as approximations of neural networks. From an exact perspective, one might investigate efficient procedures for turning a rational McNaughton function encoding in pre-closed regional format into closed regional format.

\section*{Funding}
This work was carried out at the Center for Artificial Intelligence (C4AI-USP), with support by the S\~ao Paulo Research Foundation (FAPESP) [grant \#2019/07665-4] and by the IBM Corporation. This study was financed in part by the S\~ao Paulo Research Foundation (FAPESP) [grants \#2021/03117-2 to S.P., \#2015/21880-4 and \#2014/12236-1 to M.F.]; and the National Council for Scientific and Technological Development (CNPq) [grant PQ 303609/2018-4 to M.F.].

\bibliographystyle{eptcs}
\bibliography{references}

\begin{thebibliography}{10}
\providecommand{\bibitemdeclare}[2]{}
\providecommand{\surnamestart}{}
\providecommand{\surnameend}{}
\providecommand{\urlprefix}{Available at }
\providecommand{\url}[1]{\texttt{#1}}
\providecommand{\href}[2]{\texttt{#2}}
\providecommand{\urlalt}[2]{\href{#1}{#2}}
\providecommand{\doi}[1]{doi:\urlalt{https://doi.org/#1}{#1}}
\providecommand{\eprint}[1]{arXiv:\urlalt{https://arxiv.org/abs/#1}{#1}}
\providecommand{\bibinfo}[2]{#2}

\bibitemdeclare{inproceedings}{APS2023}
\bibitem{APS2023}
\bibinfo{author}{Brendon~G. \surnamestart Anderson\surnameend},
  \bibinfo{author}{Samuel \surnamestart Pfrommer\surnameend} \&
  \bibinfo{author}{Somayeh \surnamestart Sojoudi\surnameend}
  (\bibinfo{year}{2023}): \emph{\bibinfo{title}{Tight Certified Robustness via
  Min-Max Representations of {ReLU} Neural Networks}}.
\newblock In: {\slshape \bibinfo{booktitle}{2023 62nd IEEE Conference on
  Decision and Control (CDC)}}, pp. \bibinfo{pages}{6348--6355},
  \doi{10.1109/CDC49753.2023.10383700}.

\bibitemdeclare{book}{BT1997}
\bibitem{BT1997}
\bibinfo{author}{Dimitris \surnamestart Bertsimas\surnameend} \&
  \bibinfo{author}{John~N. \surnamestart Tsitsiklis\surnameend}
  (\bibinfo{year}{1997}): \emph{\bibinfo{title}{Introduction to linear
  optimization}}.
\newblock \bibinfo{series}{Athena scientific series in optimization and neural
  computation}, \bibinfo{publisher}{Athena Scientific}.

\bibitemdeclare{article}{Cas2016}
\bibitem{Cas2016}
\bibinfo{author}{Davide \surnamestart Castelvecchi\surnameend}
  (\bibinfo{year}{2016}): \emph{\bibinfo{title}{Can we open the black box of
  {AI}?}}
\newblock {\slshape \bibinfo{journal}{Nature}}
  \bibinfo{volume}{538}(\bibinfo{number}{7623}), pp. \bibinfo{pages}{20--23},
  \doi{10.1038/538020a}.

\bibitemdeclare{book}{CDM2000}
\bibitem{CDM2000}
\bibinfo{author}{Roberto~L.O. \surnamestart Cignoli\surnameend},
  \bibinfo{author}{Itala~M.L. \surnamestart D'Ottaviano\surnameend} \&
  \bibinfo{author}{Daniele \surnamestart Mundici\surnameend}
  (\bibinfo{year}{2000}): \emph{\bibinfo{title}{Algebraic Foundations of
  Many-Valued Reasoning}}.
\newblock \bibinfo{series}{Trends in Logic}, \bibinfo{publisher}{Springer
  Netherlands}, \doi{10.1007/978-94-015-9480-6}.

\bibitemdeclare{incollection}{Fin2020}
\bibitem{Fin2020}
\bibinfo{author}{Marcelo \surnamestart Finger\surnameend}
  (\bibinfo{year}{2020}): \emph{\bibinfo{title}{Logic in Times of Big Data}}.
\newblock In \bibinfo{editor}{J.~Acacio \surnamestart {de Barros}\surnameend}
  \& \bibinfo{editor}{D{\'e}cio \surnamestart Krause\surnameend}, editors:
  {\slshape \bibinfo{booktitle}{A True Polymath: A Tribute to Francisco Antonio
  Doria}}, \bibinfo{publisher}{College Publications}, pp.
  \bibinfo{pages}{184--198}.

\bibitemdeclare{article}{FP2020}
\bibitem{FP2020}
\bibinfo{author}{Marcelo \surnamestart Finger\surnameend} \&
  \bibinfo{author}{Sandro \surnamestart Preto\surnameend}
  (\bibinfo{year}{2020}): \emph{\bibinfo{title}{Probably Partially True:
  Satisfiability for {{\L}}ukasiewicz Infinitely-Valued Probabilistic Logic and
  Related Topics}}.
\newblock {\slshape \bibinfo{journal}{Journal of Automated Reasoning}}
  \bibinfo{volume}{64}(\bibinfo{number}{7}), pp. \bibinfo{pages}{1269--1286},
  \doi{10.1007/s10817-020-09558-9}.

\bibitemdeclare{article}{Ger2001}
\bibitem{Ger2001}
\bibinfo{author}{Brunella \surnamestart Gerla\surnameend}
  (\bibinfo{year}{2001}): \emph{\bibinfo{title}{Rational {{\L}}ukasiewicz Logic
  and {DMV}-algebras}}.
\newblock {\slshape \bibinfo{journal}{Neural Network World}}
  \bibinfo{volume}{11}(\bibinfo{number}{6}), pp. \bibinfo{pages}{579--594},
  \doi{10.48550/arXiv.1211.5485}

\bibitemdeclare{book}{GBC2016}
\bibitem{GBC2016}
\bibinfo{author}{Ian \surnamestart Goodfellow\surnameend},
  \bibinfo{author}{Yoshua \surnamestart Bengio\surnameend} \&
  \bibinfo{author}{Aaron \surnamestart Courville\surnameend}
  (\bibinfo{year}{2016}): \emph{\bibinfo{title}{Deep Learning}}.
\newblock \bibinfo{publisher}{MIT Press}.

\bibitemdeclare{article}{McN1951}
\bibitem{McN1951}
\bibinfo{author}{R.~\surnamestart McNaughton\surnameend}
  (\bibinfo{year}{1951}): \emph{\bibinfo{title}{A Theorem About Infinite-Valued
  Sentential Logic}}.
\newblock {\slshape \bibinfo{journal}{Journal of Symbolic Logic}}
  \bibinfo{volume}{16}, pp. \bibinfo{pages}{1--13}, \doi{10.2307/2268660}.

\bibitemdeclare{article}{Mun1994}
\bibitem{Mun1994}
\bibinfo{author}{Daniele \surnamestart Mundici\surnameend}
  (\bibinfo{year}{1994}): \emph{\bibinfo{title}{A constructive proof of
  {M}c{N}aughton's theorem in infinite-valued logic}}.
\newblock {\slshape \bibinfo{journal}{The Journal of Symbolic Logic}}
  \bibinfo{volume}{59}(\bibinfo{number}{2}), pp. \bibinfo{pages}{596--602},
  \doi{10.2307/2275410}.

\bibitemdeclare{article}{PF2020}
\bibitem{PF2020}
\bibinfo{author}{Sandro \surnamestart Preto\surnameend} \&
  \bibinfo{author}{Marcelo \surnamestart Finger\surnameend}
  (\bibinfo{year}{2020}): \emph{\bibinfo{title}{An Efficient Algorithm for
  Representing Piecewise Linear Functions into Logic}}.
\newblock {\slshape \bibinfo{journal}{Electronic Notes in Theoretical Computer
  Science}} \bibinfo{volume}{351}, pp. \bibinfo{pages}{167--186},
  \doi{10.1016/j.entcs.2020.08.009}.
\newblock \bibinfo{note}{Proceedings of LSFA 2020, the 15th International
  Workshop on Logical and Semantic Frameworks, with Applications (LSFA 2020)}.

\bibitemdeclare{article}{PF2022}
\bibitem{PF2022}
\bibinfo{author}{Sandro \surnamestart Preto\surnameend} \&
  \bibinfo{author}{Marcelo \surnamestart Finger\surnameend}
  (\bibinfo{year}{2022}): \emph{\bibinfo{title}{Efficient representation of
  piecewise linear functions into {\L}ukasiewicz logic modulo satisfiability}}.
\newblock {\slshape \bibinfo{journal}{Mathematical Structures in Computer
  Science}} \bibinfo{volume}{32}(\bibinfo{number}{9}), pp.
  \bibinfo{pages}{1119--1144}, \doi{10.1017/S096012952200010X}.

\bibitemdeclare{incollection}{PF2023b}
\bibitem{PF2023b}
\bibinfo{author}{Sandro \surnamestart Preto\surnameend} \&
  \bibinfo{author}{Marcelo \surnamestart Finger\surnameend}
  (\bibinfo{year}{2023}): \emph{\bibinfo{title}{Effective Reasoning over Neural
  Networks Using {\L}ukasiewicz Logic}}.
\newblock In \bibinfo{editor}{Pascal \surnamestart Hitzler\surnameend},
  \bibinfo{editor}{Md~\surnamestart Kamruzzaman~Sarker\surnameend} \&
  \bibinfo{editor}{Aaron \surnamestart Eberhart\surnameend}, editors: {\slshape
  \bibinfo{booktitle}{Compendium of Neurosymbolic Artificial Intelligence}},
  chapter~\bibinfo{chapter}{28}, {\slshape \bibinfo{series}{Frontiers in
  Artificial Intelligence and Applications}} \bibinfo{volume}{369},
  \bibinfo{publisher}{IOS Press}, pp. \bibinfo{pages}{609--630},
  \doi{10.3233/FAIA230160}.

\bibitemdeclare{article}{PF2023}
\bibitem{PF2023}
\bibinfo{author}{Sandro \surnamestart Preto\surnameend} \&
  \bibinfo{author}{Marcelo \surnamestart Finger\surnameend}
  (\bibinfo{year}{2023}): \emph{\bibinfo{title}{Proving properties of binary
  classification neural networks via {\L}ukasiewicz logic}}.
\newblock {\slshape \bibinfo{journal}{Logic Journal of the IGPL}}
  \bibinfo{volume}{31}(\bibinfo{number}{5}), pp. \bibinfo{pages}{805--821},
  \doi{10.1093/jigpal/jzac050}.

\bibitemdeclare{article}{RRS2019}
\bibitem{RRS2019}
\bibinfo{author}{Haakon \surnamestart Robinson\surnameend},
  \bibinfo{author}{Adil \surnamestart Rasheed\surnameend} \&
  \bibinfo{author}{Omer \surnamestart San\surnameend} (\bibinfo{year}{2019}):
  \emph{\bibinfo{title}{Dissecting deep neural networks}}.
\newblock {\slshape \bibinfo{journal}{arXiv preprint arXiv:1910.03879}},
  \doi{10.48550/arXiv.1910.03879}

\end{thebibliography}
\end{document}